\newcommand{\ignore}[1]{}
               {}
\newcommand{\abs}[1]{\left| #1\right|}
\newcommand{\Ad}{$Adv$}
\newcommand{\A}{$\cal A$}
\newcommand{\rmPi}{\rm {\Pi}}
\newcommand{\nnPi}{{\rm {\Pi}}^{(n,n)}}
\newcommand{\tnPiNaive}{{\rm {\Pi}}^{(t+1,n)}_{naive}}
\newcommand{\tnPi}{{\rm {\Pi}}^{(t+1,n)}}
\newcommand{\view}{\mathrm{VIEW}}
\newcommand{\seed}{\mathit{seed}}
\newcommand{\init}{\mathit{init}}
\newcommand{\curr}{\mathit{curr}}
\newcommand{\next}{\mathit{next}}
\newcommand{\len}{\mathit{len}}
\newcommand{\todo}[1]{{\em{TODO: {#1}}}}
\newcommand{\compind}{\ensuremath{\stackrel{\mathsf{c}}{\mathop\approx}}}
\newtheorem{definition}{Definition}
\newtheorem{theorem}{Theorem}[section]
\newtheorem{lemma}[theorem]{Lemma}
\newtheorem{proposition}[theorem]{Proposition}
\newtheorem{notation}{Notation}
\newtheorem{remark}[theorem]{Remark}
\newenvironment{proofsk}{\begin{proof}[Proof sketch.]}
{\end{proof}}
\newlength{\saveparindent}
\newlength{\saveparskip}
\newcounter{ctr}
\newenvironment{tiret}{%
\begin{list}{\hspace{1pt}\rule[0.5ex]{6pt}{1pt}\hfill}{\labelwidth=15pt%
\labelsep=3pt \leftmargin=18pt \topsep=1pt%
\setlength{\listparindent}{\saveparindent}%
\setlength{\parsep}{\saveparskip}%
\setlength{\itemsep}{1pt}}}{\end{list}}
\newenvironment{newenum}{%
\begin{list}{{\rm \arabic{ctr}.}\hfill}{\usecounter{ctr}\labelwidth=17pt%
\labelsep=6pt \leftmargin=23pt \topsep=.5pt%
\setlength{\listparindent}{\saveparindent}%
\setlength{\parsep}{\saveparskip}%
\setlength{\itemsep}{5pt} }}{\end{list}}
\begin{document}

\begin{titlepage}

\title{\bf Efficient Private Distributed Computation \\
on Unbounded Input Streams\thanks{This research has been supported by the Israeli Ministry of Science and Technology (MOST), the Institute for Future Defense Technologies Research named for the Medvedi, Shwartzman and Gensler Families, the Israel Internet Association (ISOC-IL), the Lynne and William Frankel Center for Computer Science at Ben-Gurion University, Rita Altura Trust Chair in Computer Science, {\em Israel Science Foundation} (grant number 428/11), Cabarnit Cyber Security MAGNET Consortium, MAFAT and Deutsche Telekom Labs at BGU. Emails: {\tt {dolev,yuditsky}\allowbreak @cs.bgu.ac.il}, {\tt garay@\allowbreak research.\allowbreak att.com}, {\tt niv.gilboa@gmail.com}, {\tt kolesnikov@\allowbreak research.bell-labs.com.} A brief announcement will be presented in DISC 2012.}}

\author[1]{Shlomi Dolev}
\author[2]{Juan Garay}
\author[3]{Niv Gilboa}
\author[4]{Vladimir Kolesnikov}
\author[1]{Yelena Yuditsky}

\affil[1]{Department of Computer Science, Ben Gurion University of the Negev, Israel }
\affil[2]{AT\&T Labs -- Research, Florham Park, NJ }
\affil[3]{Deptartment of Communication Systems Engineering, Ben-Gurion University of the Negev, Beer-Sheva, Israel}
\affil[4]{Bell Laboratories, Murray Hill, NJ}

%


\date{}

\maketitle
\thispagestyle{empty}
\begin{abstract}
In the problem of swarm computing, $n$ agents wish to securely and
distributively perform a computation on common inputs, in such a way
that even if the entire memory contents of some of them are exposed,
no information is revealed about the state of the computation.
Recently, Dolev, Garay, Gilboa and Kolesnikov [ICS 2011] considered this
problem in the setting of information-theoretic security, showing how to perform such computations on input streams of {\em unbounded length}.  The cost of their solution, however, is exponential in the size of the Finite State Automaton (FSA) computing the function.

In this work we are interested in efficient computation in the above
model, at the expense of {\em minimal} additional assumptions. Relying on the existence of one-way functions, we show how to process {\it a priori} unbounded inputs (but of course, polynomial in the security parameter) at a cost {\em linear} in $m$, the number of FSA states. In particular, our algorithms
achieve the following: 

\begin{tiret}

\item In the case of $(n,n)$-reconstruction (i.e. in which all $n$
agents participate in reconstruction of the distributed computation) and at most $n-1$ agents are corrupted,
the agent storage, the time required to process each input symbol and the time complexity for reconstruction are all $O(mn)$. 

\item In the case of $(t+1,n)$-reconstruction (where only $t+1$ agents 
take part in the reconstruction) and at most $t$ agents are corrupted,
the agents' storage and time required to process each input symbol are $O(m{n-1 \choose t-1})$. The complexity of reconstruction is $O(m(t+1))$.

\end{tiret}
\end{abstract}


\end{titlepage}
\clearpage
\pagenumbering{arabic} 


\section{Introduction}

Distributed computing has become an integral part of a variety of
systems, including cloud computing and ``swarm'' computing, where $n$
agents perform a computation
on common inputs.  In these emerging computing paradigms, security
(i.e., privacy and correctness) of the computation is of a primary
concern.  Indeed, in
swarm computing, often considered in military contexts (e.g., unmanned
aerial vehicle (UAV) operation), security 
of the data and program state is of paramount importance;
similarly, one persistent challenge in the field of cloud
computing is ensuring the privacy of users' data, demanded by
government, commercial, and even individual cloud users.

In this work, we revisit the notion of {\em never-ending}
private distributed computation, first considered by Dolev, Garay, Gilboa and
Kolesnikov~\cite{DGGK11}. In such a computation, an unbounded sequence
of commands (or inputs) are interpreted by several machines (agents) in 
a way that no information about the inputs as well as the state of
the computation is revealed to an adversary who is able to ``corrupt''
the agents and examine their internal state, as long as up to a predetermined
threshold of the machines are corrupted.

Dolev {\em et al.}
were able to provide very strong (unconditional, or information-theoretic)
security for computations performed by a finite-state machine (FSA),
at the price however of the computation being efficient only for a small
set of functions, as in general the complexity of the computation
is exponential in the size (number of states) of the FSA computing
the function.

In this work, we {\em minimally}\footnote{Indeed, the existence
of one-way functions is considered a minimal assumption in contemporary
cryptography. In particular, we do not allow the use of public-key 
cryptography.}
weaken the original model by
additionally assuming the existence of one-way functions (and hence
consider polynomial-time adversaries---in the security parameter; more
details below), and in return achieve very high efficiency as a
function of the size of the FSA.  We stress that we still consider
computation on {\em a priori} unbounded number of inputs, and
where the online (input-processing) phase incurs {\em no communication}.
We now describe the model in more detail.

\vspace{-.1in}
\paragraph{The setting.}
As in~\cite{DGGK11}, we consider a distributed computation setting in
which a party, whom we refer to as {\it the dealer}, has a finite
state automaton (FSA) ${\cal A}$ which accepts an ({\em a priori}
unbounded) stream of inputs $x_1, x_2, \ldots$ received from an
external source. The dealer delegates the computation to agents
$A_1,\ldots, A_n$, by furnishing them with an implementation of \A.
The agents receive, in a synchronized manner, all
the inputs for ${\cal A}$ during the online input-processing phase,
where no communication whatsoever is allowed.  Finally, given a signal
from the dealer, the agents terminate the execution, submit their 
 internal state to the dealer, who computes the  state of ${\cal A}$  and returns it as output.

We consider an attack model where an entitiy,
called the adversary, \Ad, is able to adaptively ``corrupt'' agents
(i.e., inspect their internal state) during the online execution
phase, up to a threshold\footnote{We note that more general access 
structures may be naturally employed with our constructions.} 
$t < n$.  We do not aim at maintaining the privacy of the
automaton \A; however, we wish to protect the secrecy of the state of
\A\ and the inputs' history.  We note that \Ad\ may have external
information about the computation, such as partial inputs or length of
the input sequence, state information, etc.  This auxiliary
information, together with the knowledge of \A, may exclude the
protection of certain configurations, or even fully determine \A's
state.  We stress that this cannot be avoided in any implementation,
and we do not consider this an insecurity.  Thus, our goal is to
prevent the leakage or derivation by
\Ad\ of any knowledge from seeing the execution traces which \Ad\ did
not already possess.

As mentioned above, our constructions relying on one-way functions
dictates that the computational power of entities (adversary, agents),
be polynomially bounded (in
$\kappa$, the security parameter).
Similarly, our protocols run on input streams of polynomial length.
At the same time, we do not impose an {\it a priori} bound on its length; 
moreover, the size of the agents' state is independent of it.
This allows  to use agents of the same (small) complexity (storage and 
computational power) in all situations.

\vspace{-.1in}

\paragraph{Our contributions.}
Our work is the first significant extension of the work
of~\cite{DGGK11}.  Towards our goal of making never-ending and private
distributed computation practical, we introduce an additional (minimal)
assumption of existence of one-way functions (and hence pseudo-random
number generators [PRGs]), and propose the following constructions:
\begin{tiret}
\item  A scheme with $(n,n)$ reconstruction (where all $n$ agents participate in reconstruction), where
the storage and processing time per input symbol is $O(mn)$ for each
agent. The reconstruction complexity is $O(mn$).
\item A scheme with $(t+1,n)$ reconstruction (where $t$ corrupted agents do not 
take part in the reconstruction), where the above costs are 
$O (m{n-1 \choose t-1})$.\footnote{For some values of $t$, e.g. $t=\frac{n}{2}$, this quantity would be exponential in $n$. This does not contradict our
assumption on the computational power of the participants; it simply
means that, given $\kappa$, for some values of $n$ and $t$ this
protocol cannot be executed in the allowed time.}
\end{tiret} 
Regarding tools and techniques, the carefully orchestrated use of PRGs
and secret-sharing techniques~\cite{Sha79} allows our protocols to hide the state of the computation against an adaptive adversary by using share re-randomization. 
Typically, in the context of secret sharing, this is simply done by the addition of a suitable (i.e., passing through the origin) random polynomial. However, due to the no-communication requirement, share re-randomization is a lot more challenging in our setting. This is particularly so in the more general case of the $(t+1,n)$-reconstruction protocol. We achieve share re-randomization by sharing PRG seeds among the players in a manner which allows players to achieve sufficient synchronization of their randomness, which is resilient to $t$ corruptions.

\vspace{-.1in}

\paragraph{Related work.} 
%
%
Reflecting a well-known phenomenon in distributed computing, where
a single point of failure needs to be avoided,
a team of agents (e.g., UAVs)  that collaborate in a mission
is more robust than a single agent trying
to complete a mission by itself (e.g., \cite{Bamberger,BDDS10}). 
Several techniques have been  suggested for this purpose;
another related line of work is that of automaton splitting and
replication, yielding
designs that can tolerate faults and as well as provide
some form of privacy of the computation
(see, e.g.,~\cite{DKS07,DLY07,DGKPS10,DGGK09,DGGK11}).
As mentioned above, 
only~\cite{DGGK11}
addresses the unbounded-input-stream scenario.
\ignore{
Techniques for information-theoretic communication among the participants of 
such systems have been investigated as well~\cite{DO00,BD03,DGGN08,DGKPS10},
some of which might be amenable for enhancement and application to our
scenario as well.\todo{VLAD: which techniques can be amenable? We have unbounded-input non-interactive online phase}
} 


Recall that in {\em secure multi-party
computation}~\cite{DBLP:conf/stoc/GoldreichMW87,BGW88,CCD88}, $n$
parties, some of which might be corrupted, are to compute an $n$-ary
(public) function on their inputs, in such a way that no information
is revealed about them beyond what is revealed by the function's
output.  At a high level, we similarly aim in our context to ensure
the correctness and privacy of the distributed computation. However,
as explained in~\cite{DGGK11}, our setting is significantly different
from that of MPC, and MPC definitions and solutions cannot be directly
applied here.  The reason is two-fold: MPC protects players {\em
individual} inputs, whereas in our setting the inputs are common to
all player.  Secondly, and more importantly, MPC operates on inputs of
fixed length, which would require an {\it a priori} estimate on the 
maximum input size $s_{max}$ and agents' storage linear in $s_{max}$.  
While unbounded inputs could be processed, 
by for example processing them ``in blocks,'' this would require
communication during the online phase, which is not allowed
in our setting. Refer to~\cite{DGGK11} for a more detailed discussion
on the unbounded inputs setting {\em vis-\`{a}-vis} MPC's.


Finally, we note that using recently proposed fully-homomorphic encryption
(FHE)~\cite{DBLP:conf/stoc/Gentry09} (and follow-ups) trivially solves
the problem we pose, as under FHE the agents can simply compute arbitrary
functions. In fact, plain additively homomorphic encryption 
(e.g.,~\cite{EC:Paillier98}) can
be used to encrypt the current state of the FSA and non-interactively
update it as computation progresses, in a manner similar to what is
described in our constructions (see the high-level intuition in
Section~\ref{outline}).  We note that, firstly, public-key encryption
and, dramatically so, FHE, suffer from orders-of-magnitude
computational overhead, as compared to the symmetric-key operations that
we rely on.  More importantly, in this work we aim at minimizing
the assumptions needed for efficient unbounded private distributed
computation.

\ignore{
\todo{Move the MPC discussion below to appendix}

\subsection{Inapplicability of MPC to our setting}
\label{app-mpc}
Firstly, MPC aims to solve a different problem, that of protecting the
players' individual inputs from \Ad, who can corrupt some of them,
learn their input and observe the communication they receive.  In
contrast, in our problem the inputs are common to all the players (but
not {\it a priori} known to \Ad, or revealed in case of corruption),
and the goal is to protect the state of, as well as the inputs to, the
computation.  (Therefore, we cannot in particular treat the common
input as public information, and the shares received from the dealer
as MPC input.)

Of course, an adequate representation (circuit-based, for example) of
the MPC computation would be able to evaluate \A, with respect to a
subset of corrupted players, and at least for the basic MPC setting,
where there is a single (tuple of secret) input(s) out of which an
output (tuple) is produced.  But then comes our main feature, of
multiple, possibly unbounded number of input symbols.  This is
reminiscent of secure {\it reactive} systems (e.g.,~\cite{PW00}),
where the computation is not limited to ``one shot'' as above, but
instead processes inputs ``in blocks'' throughout several rounds of
interaction. However, because all MPC solutions (and definitions) are
explicitly tied to the length of the input, being able to handle
unbounded number of inputs without communication does not seem
immediate.  This is what our Krohn-Rhodes-based approach achieves, at
the expense of solving a narrower problem.

\todo{VLAD: end of appendix block}

\paragraph{Private perennial computation without communication from Krohn-Rhodes decomposition.}
Dolev et al.~\cite{DGGK11}
 show the feasibility of achieving private  computation  on unbounded inputs
with non-interactive input-processing phase, at the cost of limiting the class of  
functions that can be evaluated---specifically, those carried out by 
finite-state automata (FSA).  Their cost is also in general exponential in the FSA size.

In this work, we aim to achieve much more practical efficiency for distributed FSA evaluation.  We {\em minimally} weaken our model by additionally assuming the existence of one-way functions (and hence consider polytime adversaries), and in exchange achieve efficiency linear in the FSA size.  We stress that we still consider computation on {\em a priori} unbounded inputs, where the online (input-processing) phase incurs not communication.

} 

\vspace{-.1in}

\paragraph{Organization of the paper.} The remainder of the paper is
organized as follows. In Section~\ref{sec-Prelim_and_outline} we present
in more detail the model, definitions and building blocks that we use 
throughout the paper. We dedicate Section~\ref{outline} to a high-level
description of our constructions, while in Section \ref{sec-from-all} we 
present them in detail. The full privacy analysis is presented in Section \ref{sec-priv-ana}. 
\ignore{The first scheme $\Pi_1$, in which we need all of the agents to reconstruct the current state of the computation is described in section~\ref{sec-from-all-details}. In section \ref{sec-from-t} we present two schemes $\Pi_2$ and $\Pi_3$, in which we need only a subset of the agents to reconstruct the current state of the computation. Sections \ref{sec-from-all-details} and \ref{sec-from-t} contain proofs of correctness and security.
} 

\section{Model and Definitions}
\label{sec-Prelim_and_outline}

A {\it finite-state automaton} (FSA) \A\ has a finite set of states
$ST$, a finite alphabet $\Sigma$, and a transition
function $\mu: ST \times \Sigma \longrightarrow ST$. In this work we
do not assume an initial state or a terminal state for the automaton,
i.e., it may begin its execution from any state and does not
necessarily stop.

We already described in the previous section the distributed
computation setting---dealer, agents, adversary, and unbounded input
stream---under which the FSA is to be executed. In more detail, we
assume a {\em global clock} to which all agents are synchronized.  We
will assume that no more than one input symbol arrives during any
clock tick.  By {\em input stream}, 
we mean a sequence of input symbols arriving at a certain schedule of
clock ticks.  Abusing notation, we will sometimes refer to the input
without explicit reference to the schedule.  (We note that the global
clock requirement can in principle be removed if we allow the input
schedule to be leaked to \Ad.)
\ignore{
As in~\cite{DGGK11}, we consider a distributed computation setting in
which a party, which we refer to as {\it the dealer}, has an FSA \A\
which accepts an ({\it a priori} unbounded) stream of inputs $x_1,
x_2,
\ldots$ received from an external source. The dealer delegates the 
computation to a set of agents $A_1,
\ldots, A_n$, by 
furnishing them with an implementation of \A.
The agents, each receiving
all the inputs for \A, execute their 
implementation of \A\ without communicating with each other, and,
at a given signal from the dealer, terminate the
execution, compute the current state of \A\ and return it as output. 
We will typically use $\rmPi$ to refer to distributed computation 
schemes thus described.

We consider 
an adversarial model where an entity, 
called the adversary \Ad, is allowed to corrupt agents as the
execution of the protocol proceeds. We consider the so-called
{\em passive} or {\em semi-honest} adversary model, where corrupted
agents can combine their views in order to learn protected information,
but are not allowed to deviate from the protocol. Each agent can be
corrupted only once during an execution. \Ad\ can view the entire
contents of a corrupted agent's memory, but does not obtain any of the
global inputs. Incidentally, we consider event processing by an agent
as an {\em atomic operation}.  That is, agents cannot be corrupted
during an execution of  
state update.  This
is a natural and easily achievable assumption, which allows us to not
worry about some tedious details.  The computation is then considered
to be secure, if any two executions (possibly on different inputs and
initial states---defined more formally
below) 
are ``similarly'' distributed.  } 

We also
mentioned that \Ad\ is
allowed to corrupt agents as the execution of the protocol
proceeds. We consider the so-called {\em passive} or {\em semi-honest}
adversary model, where corrupted agents can combine their views in
order to learn protected information, but are not allowed to deviate
from the protocol. Furthermore, each agent can be corrupted only once
during an execution. When it does, \Ad\ can view the entire contents
of a corrupted agent's memory, but does not obtain any of the global
inputs. 

Incidentally, we consider event processing by an agent as an
{\em atomic operation}.  That is, agents cannot be corrupted during an
execution of
state update.  This
is a natural and easily achievable assumption, which allows us to not
worry about some tedious details.  The computation is then considered
to be secure, if any two executions (possibly on different inputs and
initial states---defined more formally
below) 
are ``similarly''
distributed.  

This model of security for distributed computation on unbounded input
streams was introduced by Dolev {\em et al.}~\cite{DGGK11} as the {\it
progressive corruption} model (PCM), allowing \Ad\ to be
computationally unbounded, and in particular requiring that the
distributions of the two executions (again, more formally defined
below) be identical.

In this work we use a variant of PCM, applying the following two
weakenings to the PCM definition:
\begin{newenum}
\item Rather than requiring that the distributions of executions be identical,
we require them to be {\em computationally} indistinguishable.
This means that we guarantee security only against polynomial-time-bounded
adversaries.

\item We 
require indistinguishability of executions for the {\em same} corruption
timeline (and, of course, different input streams).  This means that,
for example, agent IDs are now allowed to be included in the agents'
views. (We use agent IDs in one of our contructions.)
We stress that this is not a significant security weakening,
as essentially we only allow the adversary to differentiate among the
agents' identities;
the inputs and current state of the computation remain computationally
hidden.
\end{newenum}

We now present our amended PCM definition.  We first formalize the
notion of 
{\em corruption timeline} and  the view of the adversary.

\begin{definition}
A {\em corruption timeline} $\rho$ is a sequence $\rho=((A_1,
\tau_1),\ldots,(A_k, \tau_k))$, where $A_1,\ldots,A_k$ are the
corrupted agents and $\tau_1,\ldots,\tau_k$ ($\tau_1 \leq \ldots \leq \tau_k$)
 denote the time when the corresponding corruption took place.
The {\em length} of
a corruption timeline is $|\rho|=k$.
\end{definition} 



We denote by $\view_{\rho}^{\rmPi}(X, s)$ the probability
distribution of the aggregated internal states of corrupted agents at
the time of corruption, when executed on input $X$ and
initial state $s$.

\begin{definition} [\bf Computational Privacy in the Progressive Corruption Model]
\label{defProgressive}
We say that a distributed computation scheme $\rmPi$ is {\rm
$t$-private in the Progressive Corruption Model (PCM)} if for every
two states $s_1, s_2 \in ST$, polynomial-length input streams $X_1,
X_2$, and any corruption timeline $\rho$, $|\rho| \leq
t$,
\[
\view_{\rho}^{\rmPi}(X_1, s_1) \compind \view_{\rho}^{\rmPi}(X_2, s_2).
\]
\end{definition}
\noindent Here, `$\compind$' denotes the computational indistinguishability
of two distributions.



\subsection{Tools and Building Blocks}
\label{sec:tools-app}
A pseudo-random generator (PRG) $G : X \rightarrow Y$, where $X$ and $Y$ are typically of the
form $\{0,1\}^k$ and $\{0,1\}^{k+l}$, respectively, for some positive
integers $k,l$. Recall that PRGs are known to exist based on the
existence of one-way functions, and that the security property of a
PRG guarantees that it is computationally infeasible to distinguish
its output on a value chosen uniformly at random from $X$ from a value
chosen uniformly at random from $Y$ (see,
e.g.,~\cite{Goldreich2000}). In our setting, we will further assume that
the old values
of the PRG seeds are securely erased by the agents upon use and hence are
not included in the view of the adversary.

The other basic tool that our protocols make use of 
is
{\em secret sharing}~\cite{Sha79}, where
essentially, a secret piece of information is ``split'' into shares
and handed out to a set of players by a distinguished player called
{\em the dealer}, in such a way that up to a threshold $t < n$ of the
players pulling together their shares are not able to learn anything
about it, while $t + 1$ are able to reconstruct the secret. We
present the specific instantiations of secret sharing as needed
in the 
corresponding sections.

\section{Overview of Our Approach}
\label{outline}
Let $\cal A$ be a publicly known automaton with $m$ states. We assume
that we have some ordering of the states of \A, which are denoted
by corresponding labels.  Every agent
stores the description of the automaton. In addition, during the
computation, for every state $s_j$ of \A, every agent $A_i$ computes
and stores its current label $\ell _j ^ i$. 
As mentioned above, all agents receive a global input stream
$\Gamma=\gamma_1,\gamma_2,...\gamma_i,...$ and perform computation in
synchronized time steps.

At a high level, the main idea behind our constructions is that the
state labels will be shares ({\em \`{a} la} secret
sharing~\cite{Sha79}) of a secret which identifies the currently
active state of $\cal A$.  More specifically, for each of the $m$
automaton states, the $n$ state labels (held by the $n$ agents) will
be shares of a $1$ if the state is currently active, and shares of a
$0$ otherwise.  We will show how the players' local computation on
their shares will ensure that this property is maintained throughout
the computation on the entire input stream $\Gamma$.  When the input
stream $\Gamma$ is fully processed (or a stop signal is issued), the
agents recover the current state by reconstructing the secrets
corresponding to each automaton state.  At the same time, shares of
the secrets (when not taken all together) reveal no information on the
current state of $\cal A$.

We now present additional high-level details on two variants of
the approach above. Recall that we consider the semi-honest adversary model, 
where corrupted players are not allowed to deviate from the protocol, but 
combine their views in order to learn protected information.

\vspace{-.1in}
\paragraph{$(n,n)$-reconstruction.}  In this scenario, we require that all 
$n$ agents participate in the reconstruction of the secret (corrupted
players are considered semi-honest and hence honestly provide their computed
shares).

At the onset of computation, the shares are initialized using an $(n,n)$
additive secret-sharing scheme, such that the initial state labels are the
sharing of $1$, and labels of each of the other states are shares of $0$.
When processing a global input symbol $\gamma$, each agent computes a
new label for a state $s$ by summing the previous labels of all states
$s'$ such that $\mu(s',\gamma)=s$.  It is easy to see that, due to the
fact that we use additive secret sharing, the newly computed shares
will maintain the desired
secret-sharing property.  Indeed, say that on input symbol
$\gamma$, $u$ states transition into state $s$.  If all of them
were inactive and their labels were shares of $0$'s, then the newly computed 
shares 
will encode a $0$ (as the sum of $u$ zeros).  Similarly, if one of the $u$
predecessor states was active and and its label shared a $1$, then  
the new active state $s$ will also correspond to a share a $1$.

A technical problem arises in the case of ``empty'' states, i.e.,
those that do not have incoming transitions for symbol $\gamma$, and
hence their labels are undefined. Indeed, to hide the state of the
automaton from the adversary who corrupts agent(s), we need to ensure that
each label is a random share of the appropriate secret.  Hence, we
need to generate a random $0$-share for each empty state without
communication among the agents.




In the $(n,n)$ sharing and reconstruction scenario, we will non-interactively
generate these labels pseudo-randomly as follows.  Each pair of
agents $(A_i,A_j)$ will be assigned a random PRG seed $\seed_{ij}$ 
Then, at each event (e.g., processing input symbol
$\gamma$), each agent $A_i$ will pseudo-randomly generate a string
$r_j$ using each of the seeds $\seed_{ij}$, and set the label of the
empty state to be the sum of all strings $r_j$. 
This is done for each empty state independently.  The PRG
seeds are then (deterministically) ``evolved'' thereby erasing from the
agent's view the knowledge of the labels' provenance, and making them all
indistinguishable from random.  As all agents are synchronized with
respect to the input and the shared seeds, it is easy to see that the
shares generated this way reconstruct a $0$, since each string $r_j$ will be
included twice in the total sum, and hence will cancel out (we will use an
appropriate [e.g., XOR-based] secret-sharing scheme such that 
this is ensured.).

Finally, and intuitively, we observe that  PCM security
will hold since
the view of each corrupted agent only includes pseudo-randomly
generated labels for each state and the current PRG seed value.  As noted
above, even when combined with the views of other corrupted players, the labels
are still indistinguishable from random.

\vspace{-.1in}
\paragraph{$(t+1, n)$-reconstruction.} In this scenario, up to $t$ corrupted 
agents do not take part in the reconstruction (this is motivated by the
possibility of agents (UAVs) being captured or destroyed by the
adversary).  Agents who submit their inputs are doing so correctly.
Thus, here we require $n>2t$.

We will take our $(n, n)$-reconstruction solution as the basis, and
adapt and expand it as follows.  First, in order to enable
reconstruction with $t+1$ agents, we will use $(t+1,n)$ additive
secret-sharing (such as Shamir's~\cite{Sha79}).  Second, as before,
we will use a PRG to generate labels, but now we will have a
separate seed for each subset of agents of size $n-t+1$.  Then, at each
event (e.g., processing of an input symbol),
\ignore{
 $\gamma$
\textcolor{blue}{\todo{Lena: clock cycle with or without input letter}
VLAD: We need to agree on the detail how we schedule inputs. I think
it should be at most one input per clock cycle, right.  We should
write and justify it in the model section above.},
}
each agent $A_i$,
for each of the groups he belongs to, will update its shares by
generating a random $(t+1,n)$-secret sharing of a $0$ using the randomness generated by applying $G$
to the group's seed.  Then, agent $A_i$ will use the share thus generated for
the $i$-th agent as its own,
and set the label of the empty state to be the sum of all such shares.

Here we note that, since agents are excluded from some of the groups,
and that in this scenario up to $t$ agents might not return their
state during reconstruction, special care must be taken in the
generation of the re-randomizing polynomials so that all agents have
invariantly consistent shares, even for groups they do not belong to,
and that any set of agents of size $t+1$ enable the reconstruction of
the secrets.  (See Section~\ref{sec-from-t} for details.)
The above is done for each empty state independently.
As before, the PRG seeds are then (deterministically) evolved,
making them all indistinguishable from random.

\ignore{

{\bf $(t+1, n)$-reconstruction using additively homomorphic encryption.}  In our third scenario, we will make use of additively homomorphic public-key encryption, such as Goldwasser-Micali~\cite{C:GolMic88} (for $(n,n)$ reconstruction) or Paillier~\cite{EC:Paillier98}.

Our main idea here is to encrypt the state labels with the public key known to the dealer (or reconstructible by a threshold of players).  The output is then obtained by running a secure multiparty computation on the player's current states.   Now, during FSA computation, since the labels are encrypted, they hide the state update history, and empty states can be simply labeled with encryptions of $0$ (for $(n,n)$ reconstruction) and with Shamir shares of $0$ derived from a single seed.

} 

\begin{algorithm*} [htb]
\caption{Template algorithm for agent $A_i$, $1\leq i \leq n$, for label 
and state update.\label{alg:updating}}
\label{algo:calc}
\begin{algorithmic}[1]
\REQUIRE An input symbol $\gamma$.
\ENSURE New labels for every state.
\IF{$\gamma$ is initialized} 
\STATE $ \ell ^i_j := \sum_{k, \mu(s_k, \gamma)=s_j} \ell^i_k$ (the sum is calculated over some field $\mathbb{F}$, depending on the scheme).
\ENDIF
\FOR {every $T\in {\cal T}$ s.t. $A_i\in T$}
\STATE Compute  $B^T S^T \leftarrow G(\seed^T_r)$, where $B^T=b^T_1b^T_2...b^T_m$, and $b^T_j \in \mathbb {F}$, $1\le j \le m$.
\STATE $\seed^T_{r+1} := S^T$.
\FOR {$j=1$ to $m$}
\STATE $\ell_j^i := \ell_j^i+ R_j$, where $R_j$ is a scheme-specific pseudo-random quantity.
\ENDFOR
\ENDFOR
\end{algorithmic}
\end{algorithm*}

\begin{remark}
{\em This approach reveals the length and schedule of the input
$\Gamma$ processed by the players.  Indeed, the stored seeds (or more
precisely, their evolution which is traceable by the adversary simply
by corrupting at different times players who share a seed) do reveal
to the adversary the number of times the update function has been invoked.
We hide this information by
by requiring the agents to run updates at each
clock tick.
} 
\end{remark}


Algorithm~\ref{alg:updating} 
summarizes the update operations performed by agent $A_i$ ($1\le i \le
n$) during the $r$-th clock cycle. The key point is the generation of
$R_j$, the label re-randomizing quantity. Notice also that in every
clock cycle, there may or may not be an input symbol received by the
agent; if the agent did not receive any input, we assume that the
input symbol is not initialized.


\ignore{

\todo{VLAD: I haven't gone through the algorithm. I suspect it needs some care.}

\todo{need to add text explaining that add-randomness does, depending on the instance of the protocol... what does it do?}

\todo{add-randomness is not a good name}

\todo{Juan: I tried to simplify the notation, and made add-randomness explicit. I'll probably change it further once I go over the next section.}

} 

\section{The Constructions in Detail}
\label{sec-from-all}

\subsection{The $(n,n)$-reconstruction protocol \label{sec-from-all-details}}
We start our formalization of the intuition presented above with the
case where all $n$ out of the $n$ agents participate in the state
reconstruction.  The protocol for this case, which we call $\nnPi$, is presented
below.

\vspace{-.1in}
\paragraph{Protocol $\nnPi$.} The protocol consists of three phases:

\vspace{.1in}

\noindent {\em Initialization.} The dealer secret-shares among the agents 
a secret value for each state, such that the value for the initial
state is $1$ and for all the other states is $0$. This is done as follows. 
Agent $A_i$ ($1\leq i\leq n$) is given a a random binary string
$x^i_1x^i_2...x^i_m$, with the constraints that 
$$x^1_{\init} + x^2_{\init} + ... + x^n_{\init}\equiv 1 \bmod 2,$$ 
where $\init$ is the index of the initial state of the
computation, and for every $1\leq j \neq init \leq m$, 

$$x^1_j + x^2_j + ... + x^n_j \equiv 0 \bmod 2.$$ 
Each agent then proceeds to assign its state labels as
$\ell_j^i \leftarrow x_j^i$.

\vspace{.1in}
\noindent {\em Event Processing.}  Each agent runs Algorithm~\ref{algo:calc},
updating its labels and computing the new seeds for the PRG.  
Let ${\cal T}$ be the set of all possible agents' pairs. For line 8
of Algorithm \ref{algo:calc}, each agent $A_i$ now computes
$$
R_j= \sum_{\substack{T\in {\cal T}, A_i\in T}} (b_j^T)_r.
$$


\begin{figure*} [ht]
\begin{center}
\includegraphics[width=7cm]{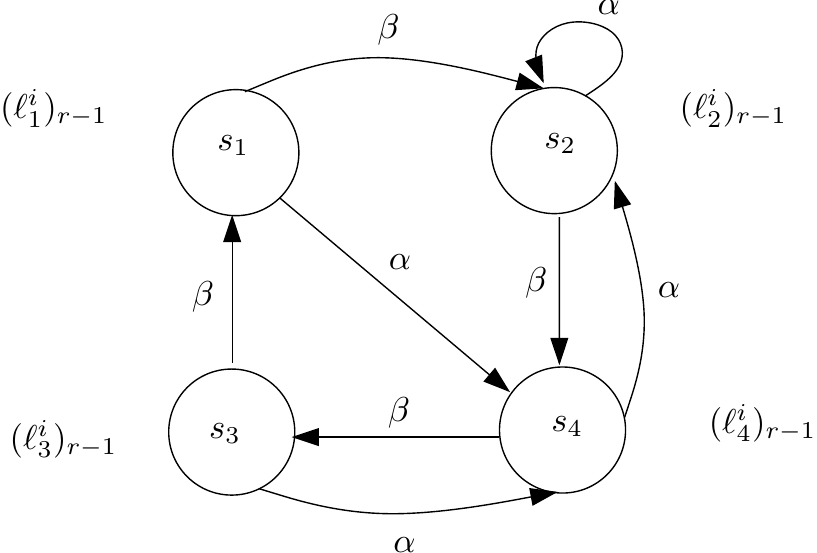}
\caption{\sl The internal state of agent $A_i$ before a transition.}
\end{center}
\label{fig:comp_scheme1}
\end{figure*}

\vspace{.1in}
\noindent {\em Reconstruction.}  All agents submit their internal states to the
dealer, who reconstructs the secrets  corresponding to each state, by 
adding (mod 2) the shares of each state, and determines and outputs 
the currently active state (the one whose reconstructed secret is $1$).

\begin{figure*} [ht]
\begin{center}
\includegraphics[width=11cm]{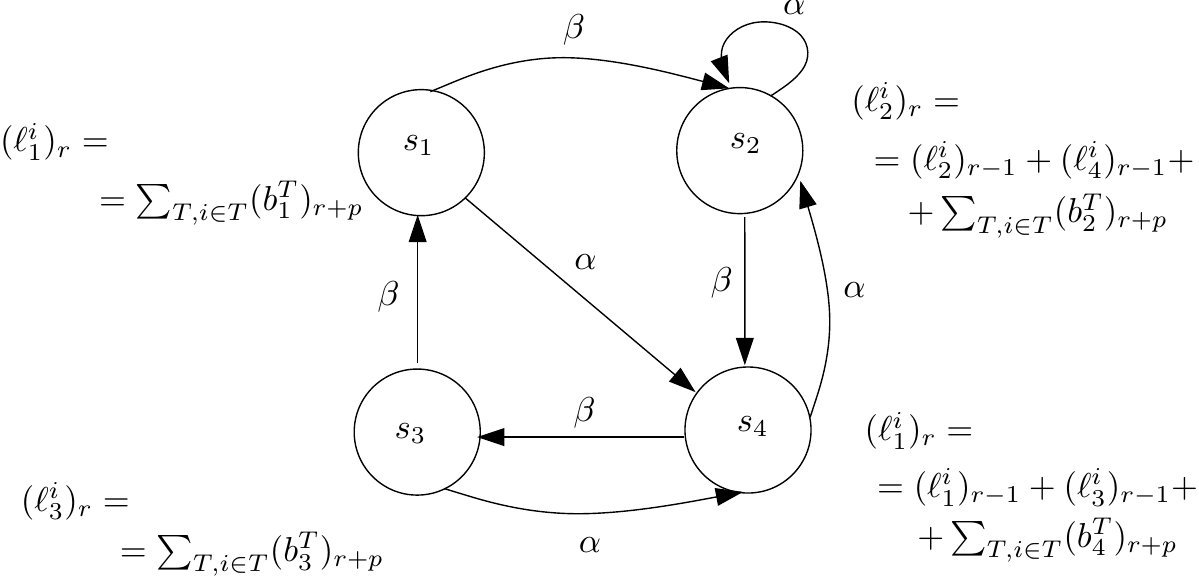}
\caption{\sl The internal state of agent $A_i$ after an $\alpha$ transition.}
\end{center}
\label{fig:comp_scheme2}
\end{figure*}

\vspace{.1in}
Before proving the correctness and privacy achieved by the protocol, 
we illustrate the operation of the online (Event Processing) phase
with the following example;
refer to Figures 1
and 2.
The two figures describe the execution of the
protocol
on an automaton with four states and two possible
inputs. 
Figure 1 presents the 
internal state of agent $A_i$ after the $(r-1)$-th clock cycle. 
The agent holds the original automaton and has a label for each of 
the four states, 
$(\ell^i_1)_{r-1}$, $(\ell^i_2)_{r-1}$, $(\ell^i_3)_{r-1}$ and
$(\ell^i_4)_{r-1}$.

Figure 2
shows the changes in the agent's internal state compared to Figure 1
after the $r$-th clock cycle. We also assume
that in this clock cycle the agents receive an input symbol
$\alpha$. The new labels for each state are the sum of old labels and
pseudo-random values. The labels in the sum are the old labels of
all the states that transition to the current state given the
input. Thus, the new $(\ell_2^i)_r$ includes a sum of the old
$(\ell^i_2)_{r-1}$ and the old $(\ell^i_4)_{r-1}$, while the new
$(\ell_3^i)_r$ doesn't include any labels in its sum because there is
no state that transitions to $s_3$ after an $\alpha$ input. The 
pseudo-random addition to each state $j=1,\ldots,4$ is the sum $\sum_{T, i
\in T} (b_j^T)_{r}$.

We start by proving  the correctness of the construction.


\begin{proposition}
\label{prop-all}
At every Event Processing step of protocol$\nnPi$'s,
the secret corresponding
to the current state in the computation is $1$ and for all other states 
the secret is $0$.
\end{proposition}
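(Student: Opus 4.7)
The plan is to prove the invariant by induction on the index $r$ of the clock cycle (Event Processing step). Let $S_j^{(r)} = \sum_{i=1}^n \ell_j^i \bmod 2$ denote the reconstructed secret for state $s_j$ after step $r$. The target invariant is: $S_{\curr}^{(r)}=1$, where $s_{\curr}$ is the state reached by $\cal A$ after consuming the input symbols seen up to step $r$, and $S_j^{(r)}=0$ for all $j\neq \curr$.

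The base case $r=0$ is immediate from the Initialization phase, where the dealer explicitly picks the $x_j^i$ so that their column sums are $1$ for $s_{\init}$ and $0$ otherwise. For the inductive step, fix $r\ge 1$ and split according to whether an input symbol $\gamma$ arrives in clock cycle $r$. In both branches, the update produced by Algorithm~\ref{algo:calc} has the form
\[
\ell_j^i(\text{new}) \;=\; \Big(\sum_{k:\,\mu(s_k,\gamma)=s_j}\ell_k^i(\text{old})\Big) \;+\; R_j^i \pmod 2,
\]
where the transition sum is either the one from line~2 (input branch) or reduces to the single term $\ell_j^i(\text{old})$ (no-input branch, treating the automaton as idling on $s_{\curr}$), and where $R_j^i = \sum_{T\in {\cal T},\, A_i\in T}(b_j^T)_r$. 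Summing over $i$ and swapping the order of summation,
\[
S_j^{(r)} \;=\; \sum_{k:\,\mu(s_k,\gamma)=s_j} S_k^{(r-1)} \;+\; \sum_{i=1}^n R_j^i \pmod 2.
\]

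The two steps I would then carry out are: (i) apply the inductive hypothesis to the first sum, so that it contributes $1$ iff the previous current state $s_{\curr}^{(r-1)}$ appears in the preimage $\mu^{-1}(s_j,\gamma)$, i.e.\ iff $s_j$ is the new current state; and (ii) show that the pseudo-random contribution vanishes mod $2$. For (ii) the key observation is that $\cal T$ is the set of unordered pairs of agents, so for every $T=\{A_i,A_{i'}\}\in{\cal T}$ the quantity $(b_j^T)_r$ appears exactly twice in $\sum_{i=1}^n R_j^i$ (once when summing over $A_i$ and once over $A_{i'}$) and hence cancels modulo $2$. Combining (i) and (ii) yields $S_j^{(r)} = \mathbf{1}[s_j = s_{\curr}^{(r)}]$, closing the induction.

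The only mildly delicate point, which I would be careful to state explicitly, is that empty states (those with no incoming transition on $\gamma$) correspond to an empty preimage and therefore automatically contribute $0$ from the transition sum; their new shares are entirely determined by the $R_j^i$ rerandomizers, whose column sum is $0$ by the pairing argument. The no-input-symbol case is handled identically by declaring the transition sum to be $\ell_j^i(\text{old})$ so that the "active $1$" is simply preserved under the rerandomization. I do not expect any genuine obstacle; the whole proof is essentially an exchange of summation together with the mod-$2$ pairing cancellation built into the choice of ${\cal T}$.
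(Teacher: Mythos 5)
Your proposal is correct and follows essentially the same route as the paper's own proof: induction on the clock cycle, summing the update rule over all agents, invoking the inductive hypothesis on the column sums, and cancelling the pseudo-random terms modulo $2$ because each $(b_j^T)_r$ appears exactly twice (once per member of the pair $T$). Your explicit treatment of empty states and of the no-input clock cycles is a slightly more careful write-up of the same argument, not a different one.
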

\begin {proof}
The proof is
by induction on the number of steps $r$ that the automaton performs, i.e., 
the number of clock cycles.

For the base case, if we consider the state of the protocol
after the initialization step and before the first clock cycle, i.e.,
at $r=0$, then the statement is true by our definition of the label
assignments. Let us first consider the case where at the $r$-th step
an input symbol $\gamma _r$ from $\Gamma$ is received.  
Following the protocol,
agent $A_i$'s new label for state $j$ becomes
$$ \ell ^i_j \longleftarrow
\sum_{\substack{k~ :\\ \mu(s_k, \gamma_r)=s_j}} \ell^i_k + \sum _{A_i \in
T} (b_j^T)_r.
$$

Consider now the next state of the computation in the automaton; we
wish to show that the secret corresponding to that state will be 1.
Let $\curr$ be the index of the current state of the automaton, and
$\next$ be the index corresponding to the next state; by definition,
$\mu(s_{\curr},\gamma_r)=s_{\next}$. Then, 
$$\ell^i_{\next} \longleftarrow
\sum_{\substack{k~:\\ \mu(s_k, \gamma_r)=s_{\next}}} \ell^i_k = \\
$$ 
$$
\ell^i_{\curr}+ \sum_{\substack{k\neq \curr~:\\ \mu(s_k,
\gamma_r)=s_{\next}}} \ell^i_k + \sum _{i \in T} (b_j^T)_r.$$ 

By the induction hypothesis, we know that 
$$\sum_{i=0}^n \ell^i_{\curr} \equiv 1
\mbox{ (mod 2)}$$ 
and for $k\neq \curr$, 
$$\sum_{i=0}^n \ell^i_k \equiv 0 \mbox{ (mod 2)}.$$ 

\noindent Thus, if we will sum over all the agents:

\begin{eqnarray*}
\lefteqn{ \sum_{i=0}^n \left( \ell^i_{\curr}+  \sum_{\substack{k\neq \curr~:\\ \mu(s_k, \gamma_r)=s_{\next}}} \ell^i_k + \sum _{i \in T} (b_j^T)_r \right)} \\
&=& \sum_{i=0}^n  \ell^i_{\curr} + \sum_{\substack{k \neq \curr~:\\ \mu(s_k, \gamma_r)=s_{\next}}} \sum_{i=0}^n \ell^i_k  \\
&+& \sum_{i=0}^n \sum _{i \in T} (b_j^T)_r  \equiv 1 + 0  \equiv 1 \mbox{ (mod 2)}.
\end{eqnarray*}

\noindent This is because in $ \sum_{i=1}^n \sum _{i \in T} (b_j^T)_r$, 
every $(b_j^T)_r$ appears exactly twice in this sum, once for
every 
element in $T$. 
Using similar arguments one can see that all the other
states will resolve to 0.  

In the case that in the $r$-th step no input symbol is received,
due to the fact that we just add the
random strings in the same way as in the case above,
we again get
that the secret corresponding to the current
state of the computation is 1, and for all others is 0.
\end {proof}

\begin{proposition}
\label{prop-pcm-all}
Protocol 
$\nnPi$ is $(n-1)$-private in the PCM model according to 
Definition~\ref{defProgressive}.
\end{proposition}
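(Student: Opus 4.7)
The plan is a hybrid argument that reduces indistinguishability to the pseudo-randomness of $G$. Since $|\rho|\le n-1<n$, at least one agent, call it $A_h$, is never corrupted under $\rho$. I will argue that $\view_\rho^{\nnPi}(X,s)$ is computationally indistinguishable from a hybrid view whose distribution is independent of $(X,s)$; since the same hybrid serves both experiments, the proposition follows by transitivity of $\compind$.

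Unrolling Algorithm~\ref{algo:calc}, any revealed label $\ell^i_j(\tau)$ can be written as a deterministic function of $(X,s,x^i)$ XORed with the cumulative mask
\[
\bigoplus_{r<\tau}\;\bigoplus_{T\ni A_i}(b^T_j)_r.
\]
The observation driving the proof is that, for every corrupted $A_i$, the pair $T=\{A_i,A_h\}$ contributes to this mask, and its seed $\seed^T$ is never in the adversary's view strictly before time $\tau$: $A_i$ has not yet been corrupted and $A_h$ is never corrupted at all.

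Enumerate the corruption events of $\rho$ in chronological order as $(A^{(1)},\tau^{(1)}),\dots,(A^{(k)},\tau^{(k)})$ with $k\le n-1$, and define hybrids $H_0,\dots,H_k$ where $H_0$ is the real experiment and $H_q$ additionally replaces, for each round $r<\tau^{(q)}$, the PRG output $B^{\{A^{(q)},A_h\}}_r$ by an independent uniform bit-string of the same length. Indistinguishability of $H_{q-1}$ and $H_q$ follows from an inner hybrid over the $\tau^{(q)}$ applications of $G$ along the $\{A^{(q)},A_h\}$-seed chain: any distinguisher reduces to a PRG distinguisher by sampling all other seeds honestly and planting the PRG challenge on this one chain. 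In $H_k$, for each corruption event with $\tau^{(q)}\ge 1$, every revealed label of $A^{(q)}$ is XORed with at least one fresh uniform bit from the $\{A^{(q)},A_h\}$ pair, hence is uniformly distributed; labels revealed at $\tau^{(q)}=0$ are uniform directly by the $(n,n)$-sharing invariant on the initial $x^{A^{(q)}}$'s. The seeds included in the view are sampled independently of $(X,s)$ by construction. Thus the distribution of the view in $H_k$ does not depend on $(X,s)$, and with $k\le n-1$ and polynomially many clock cycles the cumulative distinguishing advantage of the whole hybrid sequence is negligible.

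The main technical obstacle I anticipate is the forward-security step: the adversary is eventually shown the evolved seed $\seed^{\{A^{(q)},A_h\}}_{\tau^{(q)}}$ when $A^{(q)}$ is corrupted, and one must argue that this does not compromise the pseudo-randomness of the prior outputs $B^{\{A^{(q)},A_h\}}_r$ for $r<\tau^{(q)}$. This property is not given by generic PRG security but does follow from the iterated construction $G(\seed_r)=(B_r,\seed_{r+1})$ explicit in Algorithm~\ref{algo:calc}, together with the seed-erasure assumption of Section~\ref{sec:tools-app}, via a standard inner hybrid over its iterations. Cleanly stating and invoking this forward-security lemma is the only delicate point in the write-up.
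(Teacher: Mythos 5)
Your proof is correct, but it is organized differently from the paper's. The paper's full argument (Section~\ref{sec-priv-ana}) introduces an intermediate ``lazy'' scheme $I(\Pi,\rho)$ in which a seed is frozen until one of its holders is corrupted and a label is frozen until its owner is corrupted; it proves that scheme perfectly private via a hypergraph lemma (Lemma~\ref{prop-privacy} and Proposition~\ref{prop:priv_intermediate}), and then connects $\nnPi$ to $I(\nnPi,\rho)$ by a single hybrid over the $q(\kappa)$ clock cycles, replacing at the pivotal cycle \emph{all} PRG outputs of seeds held only by uncorrupted agents (Proposition~\ref{prop:indis_pi_int}). You instead hybridize over the $k\le n-1$ corruption events and touch only the $k$ seed chains $\{A^{(q)},A_h\}$ joining each corrupted agent to one fixed surviving agent, with an inner forward-security hybrid along each chain. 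Your route is leaner and makes explicit the one fact the paper's sketch hand-waves (``an evolved seed cannot be correlated with a prior output''), namely that the iterated construction $G(\seed_r)=(B_r,\seed_{r+1})$ with seed erasure yields $(B_0,\dots,B_{\tau-1},\seed_\tau)\compind(U,\dots,U,\seed_\tau)$; it also isolates exactly the randomness that matters, mirroring the sketch's ``XORed with $A_1$'s label'' intuition. What the paper's formulation buys is modularity: the same intermediate-scheme machinery and hypergraph lemma handle $\nnPi$, the naive threshold scheme and $\tnPi$ uniformly, where the surviving subsets are of size $n-t+1$ rather than pairs, whereas your pair-based argument is specific to the $(n,n)$ case. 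Two bookkeeping remarks: your ``cumulative mask'' $\bigoplus_{r<\tau}\bigoplus_{T\ni A_i}(b^T_j)_r$ is not the exact unrolling, since masks added at round $r'$ are subsequently propagated through the transition maps of rounds $r'+1,\dots,\tau$; this is harmless because your argument only needs the \emph{last} completed round's term $(b^{\{A_i,A_h\}}_j)_\tau$, which is added with coefficient $1$ and never re-mixed before the snapshot, and these terms are independent across states $j$ and across corrupted agents. Also be explicit that in the final hybrid the revealed evolved seeds are distributed as a function of $\rho$ alone, so the whole view is independent of $(X,s)$, which is what transitivity requires.
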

\begin {proofsk}
Recall that the underlying observation is that when a corruption takes place
(which cannot happen during the label-update procedure), the agent's state
includes the current labels and PRG seeds which have already been evolved, 
and hence cannot be correlated with the label shares previously
generated.

Without loss of generality, consider the case where \Ad\ corrupts
all but one agent according to an arbitrary corruption timeline,
and assume, say, agent $A_1$ is not corrupted.  We argue that the 
view of the adversary is indistinguishable from a 
view corresponding to
(randomly) initialized agents $A_2,...,A_n$ on the given automaton and
any initial state.  In other words, the view of the adversary is
indistinguishable from the view he would obtain if he corrupted the
agents simultaneously and before any input was processed.  Once we
prove that, the proposition follows.

The view of each corrupted agent includes $n-1$ seeds that he shares
with other agents and the FSA labels which are secret shares of $0$ or
a $1$.  We argue that, from the point of view of the adversary,
these labels are {\em random} shares 
of either $0$ or $1$.  This follows
from the PRG property that an evolved seed cannot be correlated with
a prior output of the PRG, and from the fact that $A_1$ remains uncorrupted.
Indeed, the newly generated ``empty'' states' labels look random since
the adversary cannot link them to the PRG seeds in his view.  The other
states' labels look random to the adversary since they are XORed
with $A_1$'s label.

Thus, the total view of the adversary consists of random shares of
$0$ and $1$, and is hence indistinguishable from the one corresponding
to the initial state.
\end{proofsk}

We now calculate time and storage complexity of $\nnPi$.
At every step of the
computation, each agent pseudo-randomly generates and XORs $n-1$ strings.
Further, each agent holds a small constant-length label for each
automaton state, and $n-1$ PRG seeds, yielding an
$O(m+n)$
memory requirement.


\subsection{The $(t+1,n)$-reconstruction protocol}
\label{sec-from-t}
Recall that in this case, up to $t$ of the agents might not take part
in the reconstruction, and thus $n>2t$.

A straightforward (albeit costly) solution to this scenario would be to execute $\nnPi$ independently for every subset of agents of size $t+1$.  This would involve each agent $A_i$ holding ${n-1 \choose t}$ copies of the automaton ${\cal A}$, one copy for each such subset which includes $A_i$, and updating them all, as in $\nnPi$, according to the same input symbol.  
Now, during the reconstruction, the dealer can recover the
output from any subset of $t+1$ agents. The cost of this approach would be as follows.  Every agent holds ${n-1 \choose t}$ automata (one for every $t+1$ tuple that includes this agent), and executes $\nnPi$, which requires $O(m+t)$ memory, resulting in a total cost of $O \big( {n-1\choose t}\cdot (m+t) \big)$,
with the cost of computation per input symbol being proportional to storage's.
 

We now present $\tnPi$, an improved $(t+1,n)$ reconstruction scheme,
whose intuition was already presented in Section~\ref{outline}. The protocol
uses Shamir's secret-sharing scheme~\cite{Sha79}, which we now 
briefly review. Let $\mathbb{F}$ be a field of
size greater than $n$,
and $s\in \mathbb{F}$ be the secret.
The dealer randomly generates coefficients $c_1,c_2,...,c_t$ from
$\mathbb{F}$ and construct the following polynomial of degree $t$,
$f(x)=s+c_1x+c_2x^2+...+c_{t}x^{t}$. 
The dealer gives each participant $A_i$, $1 \leq i \leq n$, the value $f(i)$.  It can be easily seen that one can reconstruct the
secret from any subset of at least $t+1$ points, and no information
about the secret is revealed by 
$t$ points (or less).

\vspace{-.15in}
\paragraph{Protocol $\tnPi$.} As before, the protocol consists of
three phases:

\vspace{.05in}
\noindent {\em Initialization.}  Using Shamir's secret sharing as
described above, the dealer shares a secret $1$ for the 
initial state and $0$ for all other states.
In addition, the dealer generates a random seed for every
set of $n - (t-1) = n-t+1$ agents, and gives each agent the seeds for the sets it
belongs to. 
Let ${\cal T}$ be the set of all possible subsets of
$n-t+1$ agents.

\vspace{.06in}
\noindent {\em Event Processing.} Each agent runs Algorithm~\ref{algo:calc} updating its labels, as follows.   

Let $T\in {\cal T}$ and $j$, $1 \le j\le m$, be a 
state of the automaton.  Upon obtaining value $b^T_j$ (refer to Algorithm~\ref{algo:calc}),
the agents in $T$ (individually) construct a
degree-$t$ polynomial, $P^T_j$, by defining its value on the following
$t+1$ field points: $0$, all the points $i$ such that $A_i \not \in T$,
and $k$ such that $k$ is the minimal agent's index in $T$ (the choice of 
which point in $T$ is arbitrary). Now define $P^T_j(0)=0$, 
$P^T_j(i)=0 ~ \forall A_i \not \in T$, and $P^T_j(k)=b^T_j$.

Observe that by this definition, {\em every} agent $A_i\in T$ can use
polynomial interpolation to compute $P^T_j(i)$, since the only required
information is $b^T_j$ (and the knowledge of set membership).

Let polynomial $P_j$ be defined as
$P_j=\sum_{T \in {\cal T}} P^T_j$. Each agent $A_i$ now computes 
$P_j(i)$ (note that this is possible since the values corresponding to
sets the agent does not belong to is set to $0$), and updates the 
$j$-th label, $1 \leq j \leq m$, in Algorithm \ref{algo:calc} by 
setting $R_j=P_j(i)$ in line $8$.

\vspace{.06in}
\noindent {\em Reconstruction.} At least $t+1$ agents submit their 
internal state to the dealer, who, for every $j=1,\ldots,m$, views the
$j$-th labels of $t+1$ agents as shares in a Shamir secret-sharing
scheme. The dealer reconstructs all the $m$ secrets using 
the scheme's reconstruction procedure, and determines and outputs the
currently active state (whose recovered secret is equal to $1$).


\begin{proposition}
\label{prop-threshold}
At every Event Processing step of protocol $\tnPi$, the shared secret
for the current state in the computation is $1$ and for all the other
(inactive) states, the shared secret is $0$. Furthermore, $t+1$ agents
can jointly reconstruct all secrets.
\end{proposition}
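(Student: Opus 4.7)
The plan is to mirror the inductive argument used for Proposition~\ref{prop-all} in the proof of $\nnPi$, but with the additive XOR-sharing replaced by Shamir sharing. The invariant I would maintain, by induction on the clock cycle $r$, is the following: after step $r$, for every state index $j$, the $n$ labels $(\ell^1_j,\ldots,\ell^n_j)$ currently held by the agents are the evaluations at $1,\ldots,n$ of some polynomial $f_j\in\mathbb{F}[x]$ of degree at most $t$, with $f_j(0)=1$ if $j$ is the index of the currently active state of \A\ and $f_j(0)=0$ otherwise. The base case $r=0$ is immediate from the Initialization phase, which Shamir-shares a $1$ for $s_{\init}$ and $0$ for all other states.

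The inductive step would rest on a small preliminary claim about the re-randomizing polynomials. For each $T\in{\cal T}$ and each $j$, the polynomial $P^T_j$ is pinned down by its values at $t+1$ distinct points of $\mathbb{F}$, namely $0$, the $t-1$ points $i$ with $A_i\notin T$, and the minimal index $k\in T$. Hence $P^T_j$ is a uniquely determined polynomial of degree at most $t$, $P^T_j(0)=0$, and by Lagrange interpolation every $A_i\in T$ can compute $P^T_j(i)$ from the knowledge of $T$ and of $b^T_j$ alone. Summing over $T$, the polynomial $P_j:=\sum_{T\in{\cal T}}P^T_j$ has degree at most $t$ and satisfies $P_j(0)=0$; moreover, an agent $A_i\notin T$ contributes the value $0$ to the share of $P_j$, which is exactly why each agent can compute $P_j(i)$ using only the seeds it owns while still agreeing with all other agents on a single polynomial $P_j$.

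With this in hand, the inductive step falls into two cases. If an input symbol $\gamma$ is processed at step $r+1$, then line~2 of Algorithm~\ref{algo:calc} followed by line~8 sets the new $j$-th label of $A_i$ to $g_j(i)+P_j(i)$, where $g_j:=\sum_{k:\mu(s_k,\gamma)=s_j}f_k$. Since $\deg(g_j+P_j)\le t$ and $P_j(0)=0$, the new secret is $g_j(0)=\sum_{k:\mu(s_k,\gamma)=s_j}f_k(0)$; by the induction hypothesis, exactly one $f_k(0)$ in this sum equals $1$ precisely when $s_j$ is the image of the current state under $\gamma$, and all other terms vanish, which is exactly the invariant for step $r+1$. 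If no input symbol arrives, the label update simplifies to $f_j(i)+P_j(i)$, preserving both the degree bound and the value at $0$, while the active state also remains unchanged. The final reconstruction claim then follows immediately: since each $f_j$ has degree at most $t$, any $t+1$ shares suffice to recover $f_j$, and hence $f_j(0)$, by Lagrange interpolation.

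The main obstacle in writing this up cleanly is the consistency point highlighted above: one has to verify carefully that the shares $P^T_j(i)$ computed locally by different agents in $T$ really do interpolate a single degree-$t$ polynomial, and that the ``zero padding'' performed implicitly by agents outside $T$ is compatible with the chosen $t+1$ interpolation points ($0$, the indices outside $T$, and the minimal index inside $T$). Once this bookkeeping is nailed down, the rest of the proof is a direct polynomial-arithmetic translation of the Proposition~\ref{prop-all} argument.
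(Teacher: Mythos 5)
Your proof is correct and takes essentially the same route as the paper: an induction on clock cycles maintaining the invariant that the $j$-th labels lie on a degree-$t$ polynomial whose free coefficient encodes activity, with the key observations that each $P^T_j$ is a well-defined degree-$t$ polynomial vanishing at $0$ and at all indices outside $T$, hence $P_j=\sum_T P^T_j$ re-randomizes shares without changing any secret. The consistency bookkeeping you flag is exactly the point the paper handles via its remark that every $A_i\in T$ can interpolate $P^T_j(i)$ from $b^T_j$ and set membership alone.
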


\begin {proof}
We prove the proposition by induction on the number of clock cycles $r$. We show that at each clock cycle $r$, for every state $s_j$, the $n$ labels $\ell^1_j,\ldots,\ell^n_j$ are points on a degree $t$ polynomial $Q_j$ whose free coefficient is $1$ if $j$ is the current state and $0$ otherwise.

At initialization, the 
claim is true by our definition of the label assignments. 

Assume that the induction hypothesis is correct after $r-1$. We prove
the hypothesis for the $r$-th step. Assume first that in this step the
agents receive an input letter $\gamma_r$, and denote the current
state by $s_{\curr}$. By our definition, the new label of the state $j$
of agent $i$ is

$$ \ell ^i_j \longleftarrow \sum_{\substack{k~:\\ \mu(s_k, \gamma_r)=s_j}} \ell^i_k + P_j(i),$$
\noindent or, equivalently,
$$ \ell ^i_j \longleftarrow \sum_{\substack{k~:\\ \mu(s_k, \gamma_r)=s_j}} Q_k(i) + P_j(i).$$

For every $j, 1 \leq j \leq m$, define polynomial $Q'_j$ as
$$Q'_j=\sum_{\substack{k~:\\ \mu(s_k, \gamma_r)=s_j}} Q_k + P_j.$$
Therefore, $Q'_j(i)=\ell ^i_j$ for every $j$ and every $i$. In
addition, since every $Q_k$ is of degree $t$ and so is $P_j$, we deduce
that $Q'_j$ is also of degree $t$. We finish proving the induction step by
showing that $Q'_j(0)=1$ only for the correct state.

Let $\mu(s_{\curr},\gamma_r)=s_{\next}$. By induction, $Q_{curr}(0)=1$
and $Q_j(0)=0$ for any $j \neq \curr$. Furthermore, by construction
$P_j(0)=0$, and therefore $Q'_{curr}(0)=1$. Since $Q_j(0)=0$ for any
$j \neq \curr$, we have that $Q'_j(0)=0$ for any $j \neq \next$.

If the agents do not receive any input symbol in the $r$-th clock cycle, 
then 
the claim follows by similar arguments as above.
\end {proof}

\begin{proposition}
\label{prop-pcm-tn}
$\tnPi$ is $t$-private in the PCM model according to Definition~\ref{defProgressive}.
\end{proposition}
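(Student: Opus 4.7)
The plan is to mirror the structure of the proof of Proposition~\ref{prop-pcm-all}, combining a PRG-based hybrid argument with the $t$-privacy of Shamir's secret sharing. I would fix an arbitrary corruption timeline $\rho$ with $|\rho|\le t$, let $C=\{i_1,\ldots,i_{|\rho|}\}$ denote the indices of corrupted agents in the order of corruption, and consider an adversary attacking a pair $(X,s)$. The first step is a hybrid argument that, set by set $T\in{\cal T}$, replaces every PRG output block $B^T_r$ generated strictly before the first corruption of any agent in $T$ with a uniformly random string. Since old seeds are erased upon evolution and by the security of the PRG, adjacent hybrids are computationally indistinguishable, so in the final hybrid the pre-reveal portions of every PRG stream are truly uniform.

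Inside the final hybrid I would argue that the adversary's joint view of seeds and labels is distributed independently of $(X,s)$. The current seed the adversary obtains at each $\tau_k$ is fresh and uniform by construction. For the labels, using Proposition~\ref{prop-threshold} I would expand
\[
Q_j(i)=Q^0_j(i)+(\text{deterministic transition contributions})+\sum_r P_j^{(r)}(i),
\]
where $P_j^{(r)}=\sum_{T\in{\cal T}}(b^T_j)_r\,Q^T$ and $Q^T$ is the fixed degree-$t$ polynomial with roots on $\{0\}\cup T^c$ and value $1$ at the minimal index of $T$.

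The crux is to show that the $|C|\le t$ shares the adversary collects for each state $j$ are uniformly and independently distributed. The key linear-algebraic observation is that the family $\{Q^T\}$ spans enough of the space to re-randomize all coordinates in $C$: writing $H$ for the set of honest indices, for each $k\in C$ the set $T_k=H\cup\{k\}$ has size $n-t+1$ and satisfies $T_k\cap C=\{k\}$, so $Q^{T_k}$ has roots exactly on $\{0\}\cup(C\setminus\{k\})$ and therefore evaluates on $C$ to a vector whose only nonzero entry is at position $k$. The $|C|$ such vectors are linearly independent, so a single update step with fresh random $(b^T_j)_r$'s already produces a uniformly random re-randomizing vector on the adversary's coordinates. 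Combined with Shamir's $t$-privacy, the joint collection of shares $\{Q_j(i_k)\}_{j,k}$ observed at the corruption times becomes uniform and independent of the current automaton state, and hence of $s$ and $X$.

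The main obstacle I anticipate is the bookkeeping across multiple corruption times, because once $\tau_{k-1}$ has passed the adversary can deterministically compute future PRG outputs for every $T$ containing an already-corrupted agent. I must therefore verify that for the agent $A_{i_k}$ corrupted at $\tau_k$ there is a set $T\ni i_k$ disjoint from $\{i_1,\ldots,i_{k-1}\}$ whose outputs during $[\tau_{k-1},\tau_k)$ remain uniform to the adversary; the assumption $n>2t$ ensures the existence of $T=H\cup\{i_k\}$ of size $n-t+1$, where $H$ has $n-t\ge t+1$ honest members. Assembling the PRG hybrid with this linear-algebraic re-randomization step yields $\view_{\rho}^{\tnPi}(X_1,s_1)\compind\view_{\rho}^{\tnPi}(X_2,s_2)$, completing the proof.
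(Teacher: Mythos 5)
Your proof is correct in substance and rests on the same two pillars as the paper's argument, but it packages them differently. The paper (Section~\ref{sec-priv-ana}) first defines an intermediate scheme $I(\tnPi,\rho)$ in which seeds and labels are frozen until the relevant corruptions occur, proves information-theoretic privacy of that scheme via a general ``distribution hypergraph'' lemma (Lemma~\ref{prop-privacy}), and then bridges to $\tnPi$ with a hybrid over the $q(\kappa)$ clock cycles; you instead run the hybrid per seed (replacing each set $T$'s pre-corruption output blocks by uniform strings) and then replace the hypergraph lemma by an explicit computation with the evaluation vectors $Q^{T}|_{C}$. The combinatorial heart is identical in both: for each corrupted agent $A_{i_k}$ there is a set of size $n-t+1$ containing $i_k$ and no other corrupted agent, whose seed's outputs mask $A_{i_k}$'s labels with a nonzero coefficient $Q^{T_{i_k}}(i_k)$ while contributing $0$ to every other corrupted coordinate; your observation that these restricted vectors are (scaled) standard basis vectors on $C$ is exactly what makes Lemma~\ref{prop-privacy} apply in the paper's version. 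Your route is arguably more concrete about \emph{why} the $(t+1,n)$ case works (the Lagrange-type polynomial $Q^T$ vanishing on $\{0\}\cup T^c$), whereas the paper's intermediate-scheme formulation handles all three protocols uniformly and makes the ``only the most recent, still-unrevealed re-randomization matters'' bookkeeping automatic. Two small points to tidy up: $T_k=H\cup\{k\}$ has size $n-t+1$ only when exactly $t$ agents are corrupted, so for $|C|<t$ you should take $T_k=H'\cup\{k\}$ for an arbitrary $(n-t)$-subset $H'\subseteq H$; and your per-seed hybrid implicitly uses the forward security of the iterated PRG (current evolved seed uncorrelated with erased past outputs), which deserves its own sub-hybrid over time steps, as the paper's clock-cycle hybrid makes explicit.
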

\noindent At a high level, the proof follows the steps of the proof of 
Proposition~\ref{prop-pcm-all}. The full details of the privacy
analysis are presented in Section \ref{sec-priv-ana}.

We now calculate the costs incurred by the protocol.  The space
complexity of each agent is as follows. An agent holds a label for
every state, i.e. $m\cdot (\lceil log |\mathbb{F}| \rceil +1)$
bits. Additionally every agent holds $ {n-1 \choose n-t}={n-1 \choose \ t-1}$ seeds, where every seed is of size $\len$. Thus, in total we have ${n-1 \choose \ t-1}\cdot \len + m\cdot (\lceil log |\mathbb{F}| \rceil +1)$
bits. Each step of the Event Processing phase requires $O(m {n-1 \choose \ t-1})$ time for seed manipulation and field
operations. Reconstruction (by the dealer)
is just interpolation of $m$ polynomials of degree $t$.

\newpage
\section*{Supplementary Material:}
\section{Privacy Analysis in Detail \label{sec-priv-ana}}

\ignore{
\todo{I started reading the proofs but got confused a little.  I modified and moved the definitions part to prelims.  I wrote my own proofs in previous sections.  I am not 100\% convinced by my proofs, so please adjust and/or check if they make sense. I think the rest of this section can go.

 Also, you guys pay a lot of attention to the PRG syntax (and other minute details).  Usually in crypto papers it is taken for granted, and people don't write it all out.}

} 

We show that each of our schemes in Sections
\ref{sec-from-all-details} and \ref{sec-from-t} is computationally
private in the PCM in two stages. In the first stage we construct for
each scheme $\Pi$ and every possible corruption timeline $\rho$ an
intermediate scheme, $I(\Pi, \rho)$. We prove that if the corruption
timeline is $\rho$ then the view of an adversary in $I(\Pi, \rho)$ is
independent of the state of the automaton. In other words, the
adversary's view is distributed identically for any initial state and
any sequence of input symbols.

In the second stage we prove that the view of an adversary in $\Pi$
with {\em any} efficiently constructible corruption timeline $\rho$
and any efficiently constructible input stream is computationally
indistinguishable from the adversary's view in $I(\Pi, \rho)$. We
deduce that $\Pi$ is computationally private in the PCM.

\subsection{Constructing $I(\Pi, \rho)$}

\begin{notation}
Let $\nnPi$ denote the scheme of Section \ref{sec-from-all} that
requires all the agents for reconstruction, and let $\tnPiNaive$ 
and $\tnPi$
denote the threshold schemes of Section \ref{sec-from-t}. We say that
an adversary is {\em appropriate} for the scheme $\nnPi$ if it
corrupts at most $n-1$ agents. We say that an adversary is {\em
appropriate} for the $\tnPiNaive$ and $\tnPi$ if schemes it corrupts
at most $t$ agents.
\end{notation}

Let $\Pi$ be one of the schemes $\nnPi, \tnPiNaive$ or $\tnPi$. $\Pi$ defines initial data that an agent $A$ stores: a description of the automaton, a label for each node in the automaton and random seeds that are shared with other agents. For each scheme the domain of seeds is $\{0,1\}^{len}$ while the domain of labels is a field $\mathbb{F}$. For example, in $\nnPi$ the field is $\mathbb{F}=\mbox{GF}(2)$. The subsets of agents that share a single seed are specific to each scheme. The description of $I(\Pi, \rho)$ follows.

\smallskip
\noindent
{\bf Initialization:} An agent $A_i$ is initialized with a description of the automaton as in $\Pi$. For every subset of agents $T$ such that $A_i \in T$, if the agents in $T$ share a seed in $\Pi$ that other agents do not have then $A_i$ is initialized with $m+1$ elements, $seed^T_0$, $R^T_1,\ldots,R^T_m$. $seed^T_0$ is chosen uniformly at random from $\{0,1\}^{len}$, while $R^T_1,\ldots,R^T_m$ are chosen uniformly at random and independently from $\mathbb{F}$. $A_i$ computes the initial label of the $j$-th state (indexed from $1$ to $m$) over $\mathbb{F}$ as 
$$\ell^i_j = \sum_{T, A \in T} \alpha^T R^T_j,$$
for fixed coefficients $\alpha_T \in \mathbb{F}-\{0\}$.

The agent stores $seed^T_0$ but $R^T_1$,$\ldots$,$R^T_0$ are deleted.

\smallskip
\noindent
{\bf Processing:} Each of the schemes $\Pi$ defines data processing for every clock cycle. This processing includes computing new values for each seed and new values for each node label. 

Computing new labels and seeds in $I(\Pi, \rho)$ depends on the corruption timeline $\rho$, which is defined by a sequence $\rho=((A_1, \tau_1),\ldots,(A_t, \tau_t))$ such that the adversary corrupts agent $A_i$ at time $\tau_i$ for $i=1,\ldots,t$ and $\tau_1\leq\tau_2 \leq \ldots \leq \tau_t$. 

An agent $A$ in $I(\Pi, \rho)$ begins updating $seed^T_0$ only after the corruption of the first agent $A_i$ that holds $seed^T_0$. Therefore, at any time $\tau$, $\tau \leq \tau_i$, we have that $seed^T_{\tau} = seed^T_0$. If $\tau \geq \tau_i+1$ the agent modifies $seed^T_{\tau}$ as $\Pi$ specifies for updating a seed.
 
An agent $A_i$ in $I(\Pi, \rho)$ begins updating a state label $\ell^i_j$ only after $A_i$ is corrupted at time $\tau_i$. Therefore, when the adversary corrupts $A_i$ it obtains the original label $\ell^i_j$. For every clock cycle after $\tau_i$ the agent modifies $\ell^i_j$ as $\Pi$ specifies for updating a label.

\subsection{Privacy of $I(\Pi, \rho)$}
We show that if the corruption timeline is fixed to $\rho$ then $I(\Pi, \rho)$ is private in the information-theoretic sense. In order to do so, we introduce the following definition and lemma.
\begin{definition}
Let ${\cal A}$ be a set of agents, let $H=({\cal A}, {\cal E})$ be a hypergraph and let $\mathbb{F}$ be a finite field. We call $H$ a {\em distribution hypergraph}, if for every $T \in {\cal E}$, there is an element $R^T$ chosen uniformly from $\mathbb{F}$, such that every $A\in T$ holds $R^T$ and every $A\notin T$ has no information on $R^T$.
\end{definition}

\begin{lemma} 
\label{prop-privacy}
Let ${\cal A}$ be a set of agents and let $H=({\cal A}, {\cal E})$ be a distribution hypergraph over a finite field $\mathbb{F}$. Assume that each agent $A_i$ uses a fixed set of public elements $\{\alpha^T\}_{A_i \in T}$, such that $\forall T, \alpha^T \in \mathbb{F}-\{0\}$,  to compute a label 
$$\ell^i=\sum_{T, A_i \in T} \alpha^T R^T.$$
Assume that an adversary that corrupts an agent $A_j$ obtains both the agent's label $\ell^j$ and its random strings $R^T$ for all $T$ such that $A_j \in T$ and its . Then, the label of any uncorrupted agent $A_i$ is distributed independently of the adversary's view if and only if for any subset of agents $K$ that the adversary corrupts and for any agent $A_i, A_i \not \in K$, there exists a hyper-edge $T\in {\cal E}$, such that $A_i \in T$ and $T\cap K=\emptyset$. 
\end{lemma}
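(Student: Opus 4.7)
The plan is to prove both directions of the biconditional by exploiting the fact that each $\alpha^T R^T$ is a uniformly random element of $\mathbb{F}$ whenever $R^T$ is uniform and $\alpha^T \neq 0$, together with the standard observation that a sum in which at least one summand is uniform and independent of the others is itself uniform and independent of them.

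For the ``if'' direction, I would fix a corrupted set $K$ and an uncorrupted agent $A_i$, and let $T^\star \in \mathcal{E}$ be a hyperedge (guaranteed by assumption) with $A_i \in T^\star$ and $T^\star \cap K = \emptyset$. Split the label as
\[
\ell^i \;=\; \alpha^{T^\star} R^{T^\star} \;+\; \sum_{\substack{T\in\mathcal{E}:\, A_i\in T\\ T\neq T^\star}} \alpha^T R^T.
\]
By the distribution-hypergraph property, $R^{T^\star}$ is not held by any agent outside $T^\star$, so in particular it is independent of the adversary's entire view (which consists only of information derivable from the $R^T$ held by agents in $K$, none of which equals $R^{T^\star}$). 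Since $R^{T^\star}$ is uniform on $\mathbb{F}$ and $\alpha^{T^\star}\in\mathbb{F}\setminus\{0\}$, the term $\alpha^{T^\star}R^{T^\star}$ is uniform on $\mathbb{F}$ and independent of the rest of $\ell^i$ together with the adversary's view; hence $\ell^i$ is uniform and independent of the adversary's view.

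For the ``only if'' direction, I would argue the contrapositive: assume there exist a corrupted set $K$ and an uncorrupted $A_i$ such that every hyperedge $T$ with $A_i\in T$ satisfies $T\cap K\neq\emptyset$. Then for each such $T$ some corrupted agent in $T\cap K$ holds $R^T$, so the adversary learns $R^T$ for every $T$ contributing to $\ell^i$. Since $\ell^i$ is a deterministic function of these $R^T$ and the public coefficients $\alpha^T$, the adversary can compute $\ell^i$ exactly from its view, contradicting independence (as long as $\ell^i$ is not a.s.\ constant, which it is not because each $R^T$ is uniform on $\mathbb{F}$ and $|\mathbb{F}|>1$).

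The main obstacle I anticipate is being precise about what exactly constitutes the adversary's view and why $R^{T^\star}$ in the ``if'' direction is genuinely independent of it. The subtlety is that the adversary's view is the joint distribution of all random variables held by corrupted agents, and I need to invoke the distribution-hypergraph definition carefully to ensure that $R^{T^\star}$ is not only ``unknown'' in an informal sense but actually probabilistically independent of every $R^T$ with $T\cap K\neq\emptyset$. This follows because the $R^T$ are chosen independently across hyperedges, so once we have isolated a single hyperedge $T^\star$ disjoint from $K$, its randomness is fresh relative to everything the adversary sees, and the rest of the proof reduces to the one-time-pad-style masking argument above.
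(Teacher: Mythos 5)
Your proposal is correct and follows essentially the same route as the paper's proof: the ``if'' direction isolates a hyperedge $T^\star$ disjoint from $K$ and uses the uniform, independent $R^{T^\star}$ (with $\alpha^{T^\star}\neq 0$) as a one-time-pad mask, and the ``only if'' direction observes that if every hyperedge containing $A_i$ meets $K$, the adversary holds all relevant $R^T$ and can compute $\ell^i$ outright. Your version is slightly more careful than the paper's (explicitly invoking independence of the $R^T$ across hyperedges and noting $\ell^i$ is non-constant), but the substance is identical.
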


\begin {proof}
Let $A_i$ be an agent such that $A_i \not \in K$ and there exists a hyper-edge $T\in {\cal E}$, such that $A_i \in T$ and $T\cap K=\emptyset$. Since $R^T$ is chosen uniformly at random from $\mathbb{F}$, which is a field, and $\alpha^T \neq 0$ we have that $\ell^i=\sum_{T, A_i \in T} \alpha^T R^T$ is distributed uniformly at random in $\mathbb{F}$ and furthemore $\ell^i$ is independent of the adversary's view.  

Conversely, assume that there exists an agent $A_i \not \in K$ such that for every $T, A_i \in T$, there exists an agent $A$, $A \neq A_i$,  such that $A \in K \cap T$. Then, the adversary obtains $R^T$ for every $T$ such that $A_i \in T$ and can therefore compute $\ell^i$ without corrupting $A_i$.
\end {proof}

\begin{proposition}
\label{prop:priv_intermediate}
For each of the three possible schemes $\nnPi, \tnPiNaive, \tnPi$ and every corruption timeline $\rho$, if the adversary is appropriate for the scheme $\Pi$ $(\Pi \in \{\nnPi, \tnPiNaive, \tnPi\})$ then $I(\Pi, \rho)$ is private in the following sense. For  every two states $s_1, s_2 \in ST$ and for any two input streams $X_1, X_2\in \Gamma^\ast$,
$\mbox{VIEW}_{\rho}^{I(\Pi,\rho)}(X_1, s_1)=\mbox{VIEW}_{\rho}^{I(\Pi,\rho)}(X_2, s_2)$. Furthermore, all the state labels are random and independent elements in a finite field $\mathbb{F}$.
\end{proposition}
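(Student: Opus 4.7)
The claim has two parts: equality of views across any $(X_1, s_1)$ and $(X_2, s_2)$, and the uniform independence of the state labels visible in the view. My plan is to exploit the design of $I(\Pi, \rho)$, in which (i) the initial labels $\ell^i_j = \sum_{T \ni A_i} \alpha^T R^T_j$ are built from fresh random field elements $R^T_j$ that are \emph{completely decoupled} from the seeds $seed^T_0$, and (ii) no label is ever updated before its owning agent is corrupted. Since the seeds in the view evolve deterministically from the random $seed^T_0$ on a clock-cycle schedule that depends only on $\rho$, and the labels in the view are literally their initialization values, every coordinate of the view is a function only of $\rho$ and the internal random tape of the ideal experiment. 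Hence, once I have established that the labels themselves are uniform and independent field elements, the equality of views for any two $(X_i, s_i)$ follows immediately by inspection.

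For the label-independence claim I would apply the corrupted-side version of Lemma \ref{prop-privacy}: let ${\cal E}$ be the family of seed-sharing sets specified by $\Pi$ -- pairs for $\nnPi$, pairs inside each $(t+1)$-subset for $\tnPiNaive$, and $(n-t+1)$-subsets for $\tnPi$. It suffices to verify that for every corrupted $A_i \in K$ there is a hyperedge $T \in {\cal E}$ with $A_i \in T$ and $T \cap K = \{A_i\}$; the associated $R^T_j$ is then a fresh uniform summand appearing in $\ell^i_j$ but in no other corrupted label, which by the $\alpha^T \neq 0$ argument of Lemma \ref{prop-privacy} makes $\ell^i_j$ uniform and independent of every other label in the view. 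Independence across the $m$ state indices is automatic because $R^T_1,\ldots,R^T_m$ are drawn independently, and independence from the seed portion of the view is automatic because the $R^T_j$ are drawn independently of $seed^T_0$ and then erased.

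The main obstacle, and the only scheme-specific part of the argument, is the verification of this hypergraph covering condition at the tight thresholds. For $\nnPi$ (threshold $n-1$) there is at least one uncorrupted agent $A_j$, and the pair $T = \{A_i, A_j\}$ works for every $A_i \in K$. For $\tnPi$ (threshold $t$, hyperedges of size $n-t+1$) I need $n-t$ uncorrupted companions around each $A_i \in K$; since $n - |K| \geq n - t$, a direct count supplies them. For $\tnPiNaive$ the $\nnPi$ argument transfers verbatim inside each $(t+1)$-subset containing $A_i$, using the fact that $|K| \leq t < t+1$ leaves at least one uncorrupted agent in that subset. Assembling the three verifications with the decoupling observation from the first paragraph completes the proof.
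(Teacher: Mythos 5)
Your proposal is correct and follows essentially the same route as the paper: decompose the view into the (static) automaton description, the seeds (whose distribution depends only on $\rho$), and the labels, which are frozen at their initialization values $\sum_{T \ni A_i}\alpha^T R^T_j$ until corruption, and then invoke Lemma~\ref{prop-privacy}'s hypergraph covering condition to conclude the labels are uniform, independent field elements. Your scheme-by-scheme verification of the covering condition at the tight thresholds ($n-1$ for $\nnPi$, $t$ with $(n-t+1)$-subsets for $\tnPi$, and pairs inside each $(t+1)$-subset for $\tnPiNaive$) is exactly the counting the paper performs, only stated more explicitly.
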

\begin{proof}
The view of an adversary $I(\Pi, \rho)$ is made up of the description of the automaton, the seeds and the labels. All of these are obtained from an agent at the moment of corruption. The description of the automaton is static. The distribution of the seeds depends only on $\rho$ and as a consequence is independent of the initial state $s$ and the input stream $X$.

Therefore, the only data elements in the adversary's view that could depend on the initial state and the input stream are the state labels. 

However, just prior to the adversary corrupting an agent $A$, since the adversary is {\em appropriate} there is a subset of uncorrupted agents $T$ such that $A \in T$ and all agents in $T$ share a seed that is not known to any agent outside $T$. In $I(\nnPi, \rho)$, an appropriate adversary corrupts a total of at most $n-1$ agents and just prior to corrupting an agent there are at least two uncorrupted agents. By the definition of $\nnPi$ this pair of agents, which we denote by $T$, shares a seed. Therefore, by the definition of $I(\nnPi, \rho)$ the two agents share the elements $seed^T_0, R^T_1,\ldots,R^T_m$, which are all random and independent of the adversary's view. Such subsets $T$ of uncorrupted agents also exist in $I(\tnPiNaive, \rho)$ and $I(\tnPi, \rho)$. 

By the construction of $I(\nnPi, \rho)$, before the corruption of $A$, the state labels of $A_i$ have their initial value $\ell^i_j = \sum_{T, A \in T} \alpha^T R^T_j$ and therefore, by Lemma \ref{prop-privacy} these labels are random and independent of the adversary's view.

Therefore, the adversary's view in $I(\Pi_i, \rho)$ with corruption timeline $\rho$ is distributed identically for any initial state $s$ and any input stream $X$.
\end{proof}

\subsection{Computational Privacy of $\nnPi, \tnPiNaive$ and $\tnPi$}

We complete the analysis by proving that $\nnPi, \tnPiNaive$ and $\tnPi$ are computationally private in the PCM. 

\begin{notation}
Let $\kappa$ be a security parameter, let $\mathbb{F}$ be a field and let $q(\kappa)$ be a polynomial. Let $m, t, n$ and $len$ be parameters such that $t^2{n-1 \choose t-1}(m\abs{\mathbb{F}}+len) < q(\kappa)$ and let $G:\{0,1\}^{len} \longrightarrow \{0,1\}^{m\abs{\mathbb{F}}+len}$ be a pseudo-random generator. Denote the uniform distribution on $\{0,1\}^{m\abs{\mathbb{F}}+len}$ by $U$. 

\end{notation}

We regard $\mbox{VIEW}_{\rho}^{\Pi}(X, s)$ as a random variable that represents the whole view of the adversary in $\Pi$ and regard $\mbox{VIEW}_{\rho[\tau]}^{\Pi}(X, s)$ as a reduction of that view to the first $\tau$ clock cycles. Similarly, $\mbox{VIEW}_{\rho[\tau]}^{I(\Pi,\rho)}(X, s)$ represents the adversary's view of the first $\tau$ clock cycles in $I(\Pi,\rho)$.

\begin{notation}
Let $\Pi$ be one of the schemes $\nnPi,\tnPiNaive$ or $\tnPi$. For every $\tau=0,\ldots,q(\kappa)$ define $H_{\tau}$ to be a hybrid scheme which is identical to $I(\Pi,\rho)$ for any clock cycle $\tau'$ such that $\tau' \leq \tau$ and is identical to $\Pi$ for any clock cycle $\tau''$ such that $\tau'' > \tau$. Define a sequence of random variables $Y_0, Y_1, \ldots,Y_{q(\kappa)}$ as follows. Select an arbitrary initial state $s$, select an input stream $X$ from $\Gamma^{q(\kappa)}$ and select a corruption timeline $\rho=((A_1,\tau_1),\ldots,(A_t,\tau_t))$ from $\rho^{q(\kappa)}$. $Y_{\tau}$ is the view of an adversary for the scheme $H_{\tau}$ given the choices of $s$, $X$ and $\rho$.
\end{notation}

It follows from the definition of the schemes $H_0,\ldots,H_{q(\kappa)}$ that $H_0$ is $\Pi$ and $H_{q(\kappa)}$ is $I(\Pi,\rho)$. Therefore, $Y_0=\mbox{VIEW}_{\rho}^{\Pi}(X, s)$ and $Y_{q(\kappa)}=\mbox{VIEW}_{\rho}^{I(\Pi,\rho)}(X, s)$. 

Note that $H_{\tau}$ is well defined for any $\tau$ since the memory contents and the inputs of $\Pi$ and $I(\Pi, \rho)$ are all in the same domain (although the distribution of the memory contents is not identical). The only difference between $\Pi$ and $I(\Pi, \rho)$ is the processing at each clock cycle.

\begin{proposition}
\label{prop:indis_pi_int}
Let $\Pi$ be one of the schemes $\nnPi, \tnPiNaive$ or $\tnPi$. If the adversary is appropriate for $\Pi$ then $\mbox{VIEW}_{\rho}^{\Pi}(X, s) \stackrel{c}{\equiv} \mbox{VIEW}_{\rho}^{I(\Pi,\rho)}(X, s)$ for  any initial state $s$, any efficiently constructible corruption timeline $\rho \in \rho^{q(\kappa)}$ and any efficiently constructible input stream $X \in \Gamma^{q(\kappa)}$. 
\end{proposition}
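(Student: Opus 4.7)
The plan is a standard hybrid argument along the sequence $Y_0, Y_1, \ldots, Y_{q(\kappa)}$. Since $Y_0 = \view_{\rho}^{\Pi}(X,s)$ and $Y_{q(\kappa)} = \view_{\rho}^{I(\Pi,\rho)}(X,s)$, it suffices to show $Y_\tau \compind Y_{\tau+1}$ for every $\tau$, and then conclude by a union bound over the polynomially many steps.

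By their definitions, $H_\tau$ and $H_{\tau+1}$ agree on all clock cycles $\le \tau$ and on all cycles $\ge \tau+2$; they differ only in how cycle $\tau+1$ is processed. At that cycle $H_\tau$ invokes $\Pi$'s update (for each held seed $\seed^T$, compute $(B^T,S^T)=G(\seed^T)$, overwrite the seed by $S^T$, and fold $B^T$ into the labels), whereas $H_{\tau+1}$ invokes $I(\Pi,\rho)$'s update, which is a no-op on every seed $T$ none of whose holders has been corrupted by time $\tau+1$ and on every label of a still-uncorrupted agent. Writing $\tau_T$ for the first corruption time of any holder of $T$ (with $\tau_T=\infty$ if none occurs), the groups where the two hybrids actually differ at cycle $\tau+1$ are precisely those with $\tau_T > \tau+1$.

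For each such ``live'' group $T$ I would introduce a sub-hybrid that replaces the call $G(\seed^T_\tau)$ by a uniformly random string of length $m\abs{\mathbb{F}}+\len$ acting as $(B^T,S^T)$. The number of live groups is at most ${n-1 \choose t-1}$, which is polynomial under the stated parameter hypothesis. Each such swap reduces cleanly to the security of $G$ by an efficient embedding: on a PRG challenge $z$, place $z$ in the targeted call and simulate everything else honestly; the replaced seed is absent from the adversary's view (no holder has been corrupted by cycle $\tau+1$), so the embedding is perfect and distinguishing the consecutive sub-hybrids implies a corresponding PRG advantage. Once every live group has been randomized, the new ``seed'' $S^T$ is uniform and hence distributed identically to the untouched $\seed^T_0$ that $I(\Pi,\rho)$ continues to use, while the uniform shift $B^T$ leaves the affected labels uniform at any subsequent corruption time, matching the distribution Lemma~\ref{prop-privacy} already established for $I(\Pi,\rho)$.

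The main obstacle is bookkeeping consistency across the two nested chains. I would maintain inductively the invariant that every seed whose first corruption is still in the future and which has not yet been randomized is uniformly distributed and independent of the adversary's view, so that the precondition required to invoke PRG security is always met when it is needed. Aggregating inner and outer chains uses $O(q(\kappa)\cdot {n-1 \choose t-1})$ PRG invocations, still polynomial, so a union bound converts the negligible per-step PRG advantage into a negligible total advantage, uniformly for the three schemes $\nnPi,\tnPiNaive,\tnPi$.
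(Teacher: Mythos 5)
Your proposal is correct and follows essentially the same route as the paper: the same outer hybrid chain $Y_0,\ldots,Y_{q(\kappa)}$ over clock cycles, the same observation that consecutive hybrids differ only in how the ``live'' (all-holders-uncorrupted) seeds and labels are processed at one cycle, and the same appeal to the uniformity of labels in $I(\Pi,\rho)$ (Proposition~\ref{prop:priv_intermediate} via Lemma~\ref{prop-privacy}) to identify the uniform case with $I(\Pi,\rho)$. The only difference is bookkeeping: you resolve a single cycle by an inner sub-hybrid over live groups with single-sample PRG challenges, whereas the paper guesses $\tau$ at random and feeds all live groups a multi-sample challenge ($U_{long}$ vs.\ $G_{long}$) in one reduction --- these are equivalent up to a standard hybrid.
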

\begin{proof}
We assume towards a contradiction that the views of an adversary in $\Pi$ and in $I(\Pi,\rho)$ are not computationally indistinguishable.
Therefore, there exist a probabilistic, polynomial time algorithm $D$ and a polynomial $p(\cdot)$ such that 
$$ \abs{D_{\Pi} - D_{I(\Pi,\rho)}} - \frac{1}{p(\kappa)},$$
for an infinite number of values $\kappa$. $D_{\Pi}$ denotes $\mbox{Pr}[D(\mbox{VIEW}_{\rho}^{\Pi}(X, s))=1]$ and $D_{I(\Pi,\rho)}$ denotes $\mbox{Pr}[D(\mbox{VIEW}_{\rho}^{I(\Pi,\rho)}(X, s))=1]$.

We construct an algorithm $\bar{D}$ that distinguishes between $t^2 {n-1 \choose t-1}$ independent samples of $U$ and $t^2 {n-1 \choose t-1}$ independent samples of $G(seed)$ for a random $seed \in \{0,1\}^{len}$. Since $t^2 {n-1 \choose t-1}$ is at most a polynomial in $\kappa$, the algorithm $\bar{D}$ contradicts the assumption that $G$ is a pseudo-random generator, thus proving the proposition. Denote the distribution on $t^2 {n-1 \choose t-1}$ independent samples of $U$ by $U_{long}$ and denote the distribution on $t^2 {n-1 \choose t-1}$ independent samples of $G(seed)$ by $G_{long}$.

\medskip
\noindent
{\bf Description of $\bar{D}$:} the algorithm receives as input a description of the automaton, $\kappa$, $n$ and $t$. In addition, the algorithm receives as input a binary string $z$ of length $t^2 {n-1 \choose t-1} (m\abs{\mathbb{F}}+len)$ and decides whether it is chosen from $U_{long}$ or $G_{long}$ by performing the following steps.
\begin{enumerate}
\item Choose a random initial state $s$, select an input stream $X$ from $\Gamma^{q(\kappa)}$ and select a corruption timeline $\rho=((A_1,\tau_1),\ldots,(A_t,\tau_t))$ from $\rho^{q(\kappa)}$.
\item Choose a random $\tau$ in the range $1,2,\ldots,q(\kappa)$.
\item Simulate the operation of the agents $A_1,\ldots,A_t$ in the scheme $I(\Pi, \rho)$ for the first $\tau-1$ clock cycles.
\item In the $\tau$-th clock cycle all the agents that have already been corrupted, i.e. in cycles $1$ to $\tau-1$, execute $\Pi$ (which is identical to $I(\Pi, \rho)$ for a corrupted agent). For any uncorrupted player, including those that are corrupted in the $\tau$-th cycle do the following:
\begin{enumerate}
\item Update any seed that is shared with a corrupted player as specified by $\Pi$ (which is identical to the update process of $I(\Pi, \rho)$ for such seeds).
\item For any seed that is shared by set of uncorrupted agents $T$, select a fresh string of length $m\abs{\mathbb{F}}+len$ from $z$ and parse it as $B^T||S^T$ for $S^T \in \{0,1\}^{len}$ and $B^T=b_1^T,\ldots,b_m^T$. Replace the previous seed with $S^T$. 
\item Recall that in every $\Pi$ the label of the $j$-th state, $j=1,\ldots,m$, is updated by a linear combination of previous state labels and of elements $b_j^T$ derived from expanded seeds. $\bar{D}$ updates the label in a similar way, except that for every $T$ such that $S^T$ is shared by uncorrupted agents, $b_j^T$ is selected from $z$ as described in the previous step instead of being selected from an expanded seed.  
\end{enumerate}
\item Simulate the operation of the agents $A_1,\ldots,A_t$ in the scheme $\Pi$ for the last $q(\kappa)-\tau$ clock cycles.
\item Throughout the simulation of the agents simulate the actions of an adversary with corruption timeline $\rho$.
\item Run $D$ on the adversary's view and return the result of $D$.
\end{enumerate}

We argue that if $z$ is chosen from $U_{long}$ then the view of the adversary that $\bar{D}$ simulates is $Y_{\tau}$, while if $z$ is chosen from $G_{long}$ then the view of that adversary is $Y_{\tau-1}$. Obviously, the view that the adversary obtains in the first $\tau-1$ clock cycles is identical to the view in $I(\Pi, \rho)$ and the view in the last $q(\kappa)-\tau$ clock cycles is identical to the view in $\Pi$. Therefore, we need to prove that the view in the $\tau$-th clock cycle is identical to $I(\Pi, \rho)$ if $z$ is uniformly random and identical to $\Pi$ if $z$ is selected from $G_{long}$.

$I(\Pi, \rho)$ specifies identical processing to $\Pi$ for corrupted agents and seeds shared by corrupted agents. Therefore, the  differences are in seeds that are shared only by uncorrupted agents and in state labels of uncorrupted agents. 

In the $\tau$-th clock cycle, $\bar{D}$ replaces seeds that are shared by uncorrupted agents with strings selected from $z$. If $z$ is uniformly random then these seeds are uniformly random. Therefore, in this case the distribution of the seeds is identical to the distribution if $I(\Pi, \rho)$ is executed in the previous clock cycle, $\tau-1$. If $z$ is a sequence of elements of the form $G(seed)$, where $seed$ is random, then the new seed, $S^T$ is exactly as specified by $\Pi$ after a single clock tick. That is the expected distribution if the agents run $\Pi$ in the previous clock tick, $\tau-1$.

The state labels are updated by a linear combination in which the coefficients of each $b_j^T$ are non-zero. If $z$ is uniformly random then each $b_j^T$ is a random field element in $\mathbb{F}$ and therefore each state label is a a random field element in $\mathbb{F}$. By Proposition \ref{prop:priv_intermediate} that is identical to the distribution of state lables in $I(\Pi, \rho)$. If $z$ is a sequence of elements of the form $G(s)$, where $s$ is random, then the new label is exactly as specified by $\Pi$ after a single clock tick.

The argument above shows that once $\tau$ is given, $\bar{D}$ distinguishes between a sequence of uniform elements and a sequence of pseudo-random elements with the same probability that $D$ distinguishes between $Y_{\tau}$ and $Y_{\tau-1}$. Since $\tau$ is chosen randomly in the range $1,2,\ldots,q(\kappa)$ and since $Y_0=\mbox{VIEW}_{\rho}^{\Pi}(X, s)$ and $Y_{q(\kappa)}=\mbox{VIEW}_{\rho}^{I(\Pi,\rho)}(X,s)$ we have that 
$$\mbox{Pr}[\bar{D}(G_{long}=1)]=\frac{1}{q(\kappa)} \sum_{\tau=1}^{q(\kappa)} \mbox{Pr}[D(Y_{\tau})=1],$$
and
$$\mbox{Pr}[\bar{D}(U_{long})=1]=\frac{1}{q(\kappa)} \sum_{\tau=1}^{q(\kappa)} \mbox{Pr}[D(Y_{\tau-1})=1].$$

Therefore,
\begin{flalign*}
& \abs{\mbox{Pr}[\bar{D}(G_{long}=1)] - \mbox{Pr}[\bar{D}(U_{long})=1]} & = \\
& \frac{1}{q(\kappa)} \abs{\mbox{Pr}[D(Y_{q(\kappa)})=1] - \mbox{Pr}[D(U)=1]} & > \\
& \frac{1}{q(\kappa)\cdot p(\kappa)}
\end{flalign*}

for an infinite number of values $\kappa$. Since $\bar{D}$ distinguishes between $U_{long}$ and $G_{long}$ we deduce that $G$ is not a pseudo-random generator and have thus reached a contradiction.
\end{proof}

\begin{theorem}
If the adversary is appropriate then the schemes $\nnPi, \tnPiNaive$ and $\tnPi$ are all computationally private in the PCM. 
\end{theorem}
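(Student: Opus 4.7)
The plan is to establish the theorem by chaining together the two main preceding results, namely Proposition \ref{prop:priv_intermediate} (information-theoretic privacy of the intermediate scheme $I(\Pi,\rho)$) and Proposition \ref{prop:indis_pi_int} (computational indistinguishability between $\Pi$ and $I(\Pi,\rho)$), using transitivity of computational indistinguishability. The goal, per Definition \ref{defProgressive}, is to show that for any two initial states $s_1, s_2 \in ST$, any two polynomial-length input streams $X_1, X_2$, and any corruption timeline $\rho$ with $|\rho|$ within the appropriate threshold, the distributions $\view_{\rho}^{\Pi}(X_1, s_1)$ and $\view_{\rho}^{\Pi}(X_2, s_2)$ are computationally indistinguishable.

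First, I would fix an arbitrary appropriate scheme $\Pi \in \{\nnPi, \tnPiNaive, \tnPi\}$, fix $s_1, s_2, X_1, X_2$ and $\rho$ as above, and note that $\rho$, $X_1$, $X_2$ are all efficiently constructible (they are taken from the bounded families over which the theorem quantifies, and have polynomial length). This lets me apply Proposition \ref{prop:indis_pi_int} to both pairs $(X_1, s_1)$ and $(X_2, s_2)$, giving
\[
\view_{\rho}^{\Pi}(X_1, s_1) \compind \view_{\rho}^{I(\Pi,\rho)}(X_1, s_1)
\quad\text{and}\quad
\view_{\rho}^{\Pi}(X_2, s_2) \compind \view_{\rho}^{I(\Pi,\rho)}(X_2, s_2).
\]
Next, I would invoke Proposition \ref{prop:priv_intermediate}, which asserts that the adversary's view in $I(\Pi,\rho)$ is in fact \emph{identically} distributed for every choice of initial state and input stream; in particular,
\[
\view_{\rho}^{I(\Pi,\rho)}(X_1, s_1) = \view_{\rho}^{I(\Pi,\rho)}(X_2, s_2).
\]

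Combining these three relations by transitivity of $\compind$ yields $\view_{\rho}^{\Pi}(X_1, s_1) \compind \view_{\rho}^{\Pi}(X_2, s_2)$, which is exactly the condition of Definition \ref{defProgressive}. Since $\Pi$ was arbitrary, the theorem follows for all three schemes. The only subtlety I anticipate is making sure the transitivity step is invoked cleanly: one should note that a standard hybrid argument shows that the composition of one computationally indistinguishable step and one perfectly identical step is still computationally indistinguishable (a distinguisher with non-negligible advantage on the endpoints would contradict at least one of Propositions \ref{prop:priv_intermediate} and \ref{prop:indis_pi_int}). I expect this to be the only ``real'' step in the proof; everything else is bookkeeping about which distributions are being compared and verifying that the hypotheses of the two cited propositions are met (appropriateness of the adversary, polynomial length of $X_1, X_2$, and $|\rho|$ within the respective threshold).
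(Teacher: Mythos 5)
Your proposal is correct and follows essentially the same route as the paper's own proof: apply Proposition~\ref{prop:indis_pi_int} to both $(X_1,s_1)$ and $(X_2,s_2)$, invoke the identical distribution of $\view_{\rho}^{I(\Pi,\rho)}$ from Proposition~\ref{prop:priv_intermediate}, and conclude by transitivity. Your explicit remark on why transitivity of $\compind$ through a perfectly identical middle step is sound is a detail the paper leaves implicit, but otherwise the arguments coincide.
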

\begin{proof}
By proposition \ref{prop:indis_pi_int} if the adversary is appropriate then for every efficiently constructible corruption timeline $\rho$, $\rho \in \rho^{q(\kappa)}$, every two initial states $s_1, s_2$ and every two efficiently constructible input streams $X_1, X_2$, such that $X_1, X_2 \in \Gamma^{q(\kappa)}$ we have $\mbox{VIEW}_{\rho}^{\Pi}(X_1, s_1) \stackrel{c}{\equiv} \mbox{VIEW}_{\rho}^{I(\Pi,\rho)}(X_1, s_1)$ and $\mbox{VIEW}_{\rho}^{\Pi}(X_2, s_2) \stackrel{c}{\equiv} \mbox{VIEW}_{\rho}^{I(\Pi,\rho)}(X_2, s_2)$.

By Proposition \ref{prop:priv_intermediate} we know that
$\mbox{VIEW}_{\rho}^{I(\Pi,\rho)}(X_1,
s_1)=\mbox{VIEW}_{\rho}^{I(\Pi,\rho)}(X_2, s_2)$. Therefore,
$$\mbox{VIEW}_{\rho}^{\Pi}(X_1, s_1) \stackrel{c}{\equiv}
\mbox{VIEW}_{\rho}^{\Pi}(X_2, s_2).$$
\end{proof}

\ignore{
\subsection{Minimizing the number of the seeds}
\todo{VLAD: this sect. needs to be reworked}
In the first scheme which was described in \ref{sec-from-all}, as the seed distribution we used a clique on $n$ vertices $K_n$. In total we needed ${n \choose 2}=\frac {n(n-1)}{2}$ seeds. 
An interesting question is, are we able to minimize our number of seeds. By proposition \ref{prop-privacy}, in the first scheme, if we know that our adversary can catch at most $t$ agents, then a $t$-regular graph would be enough and will be necessary. Such a graph will have $O(tn)$ edges.
What can we say about the second scheme?
}

{\small

} 

\appendix

\end{document}